\newtheorem{instance}{Instance}
\newtheorem{theorem}{Theorem}
\newtheorem{lemma}{Lemma}
\newtheorem{proposition}{Proposition}
\newtheorem{corollary}{Corollary}
\theoremstyle{definition}
\newtheorem{definition}{Definition}
\newif\ifanonymize
\title{On Pareto-Optimal and Fair Allocations with Personalized Bi-Valued Utilities\thanks{This paper has been accepted in WINE'25.}}
  \author{Anonymous Authors} % empty author block
  \author[1]{Jiarong Jin}
  \author[2]{Biaoshuai Tao}
  \affil[1]{Shanghai Jiao Tong University, {\tt jinjiarong@sjtu.edu.cn}}
  \affil[2]{Shanghai Jiao Tong University, {\tt bstao@sjtu.edu.cn}}
\date{}
\begin{document}
\maketitle

\begin{abstract}
We study the fair division problem of allocating $m$ indivisible goods to $n$ agents with additive \emph{personalized bi-valued utilities}. 
    Specifically, each agent $i$ assigns one of two positive values $a_i > b_i > 0$ to each good, indicating that agent $i$'s valuation of any good is either $a_i$ or $b_i$.
    For convenience, we denote the value \emph{ratio} of agent $i$ as $r_i = a_i / b_i$.
    
    We give a characterization to all the Pareto-optimal allocations. Our characterization implies a polynomial-time algorithm to decide if a given allocation is Pareto-optimal in the case each $r_i$ is an integer. 
    For the general case (where $r_i$ may be fractional), we show that this decision problem is coNP-complete. 
    Our result complements the existing results: this decision problem is coNP-complete for tri-valued utilities (where each agent's value for each good belongs to $\{a,b,c\}$ for some prescribed $a>b>c\geq0$), and this decision problem belongs to P for bi-valued utilities (where $r_i$ in our model is the same for each agent).

    We further show that an EFX allocation always exists and can be computed in polynomial time under the personalized bi-valued utilities setting, which extends the previous result on bi-valued utilities. 
    We propose the open problem of whether an EFX and Pareto-optimal allocation always exists (and can be computed in polynomial time).
\end{abstract}

\section{Introduction}
\label{sec:intro}
Fair division problem studies how to allocate a set of resources to a set of agents \emph{fairly} and \emph{efficiently}.
Since the seminal works of~\citet{Steinhaus48, Steinhaus49}, this problem has received significant attention from researchers in mathematics, economics, and computer science~\citep{b3d0372254394dcea5ead76308a0ccf4,amanatidis2023fair, BramsTa96,procaccia2013cake,liu2023mixed,robertson1998cake}.
In this paper, we study the fair division problem with \emph{indivisible goods} where $m$ goods/items are allocated to $n$ agents with heterogeneous preferences.
We consider the efficiency criterion \emph{Pareto-optimality} and the fairness criterion \emph{envy-freeness}.

\paragraph{Pareto-optimality.}
An allocation is Pareto-optimal if no alternative allocation exists that improves at least one agent’s utility without decreasing any other agent’s utility. 
While identifying a Pareto-optimal allocation is straightforward (e.g., allocating all items to a single agent is Pareto-optimal when all valuations are positive), \emph{verifying} whether a given allocation is Pareto-optimal is computationally challenging. 
This verification problem is clearly in coNP, since if an allocation is not Pareto-optimal, another allocation that improves the situation for some agents without harming any others can be a certificate. 
For general additive valuations, it is shown by \citet{de2009complexity} that this decision problem is coNP-complete. 

Given the hardness of the general problem, it is natural to study restricted utility classes. A particularly well-studied case involves valuations with only a few distinct values. For \emph{bi-valued utilities}, where each agent's value for a good is either $a$ or $b$, \citet{aziz2019efficient} show that deciding Pareto-optimality can be done in polynomial time. In contrast, for \emph{tri-valued utilities}, where each value lies in $\{a, b, c\}$, the problem becomes coNP-complete~\citep{aziz2019efficient}.

\paragraph{Envy-freeness and its relaxations.}
Envy-freeness~\citep{Foley67,varian1973equity} is one of the most prominent fairness criteria considered in the past fair division literature. 
An allocation is envy-free if every agent weakly prefers her own allocated bundle over the bundle of anyone else.
In other words, each agent does not \emph{envy} any other agent.
However, an envy-free allocation may not always exist in the indivisible items setting (e.g., when one item is allocated between two agents).
Thus, relaxations of envy-freeness have been considered, and the two most notable ones are \emph{envy-freeness up to one item} (EF1)~\citep{budish2011combinatorial} and \emph{envy-freeness up to any item} (EFX)~\citep{caragiannis2019unreasonable}.
An allocation is EF1 if, for any pair of agents $i$ and $j$, \emph{there exists} an item in agent $j$'s allocated bundle whose removal will prevent agent $i$ from envying agent $j$.
An allocation is EFX if, for every pair of agents 
$i$ and $j$, the removal of \emph{any single} item from agent $j$'s bundle ensures that agent $i$ no longer envies agent $j$.
It is clear that EFX is stronger than EF1.
While it is widely known that an EF1 allocation always exists~\citep {budish2011combinatorial,lipton2004approximately}, it is an open problem if EFX allocations always exist in general.
We know that EFX allocations exist for two and three agents~\citep{plaut2020almost,chaudhury2020efx}, but the existential problem is already open for four agents~\citep{berger2022almost}.

EFX allocations have also been explored under restricted utility profiles where each agent assigns only a few distinct values to the items.
In particular, for tri-valued valuations, while the existence of EFX allocations remains open, finding even approximate solutions in this setting can be quite challenging. 
For bi-valued valuations, \citet{amanatidis2021maximum} shows that an EFX allocation always exists and can be computed in polynomial time.
Moreover, it is also shown in the same paper that an allocation maximizing the \emph{Nash welfare} (the product of all agents' utilities) can simultaneously achieve EFX and Pareto-optimality.
Unfortunately, this approach is unlikely to provide a polynomial-time algorithm to compute an EFX and Pareto-optimal allocation, as it is later known that computing the maximum Nash welfare allocation is APX-hard for bi-valued valuations~\citep{akrami2022maximizing,fitzsimmons2024hardness}.
Later, \cite{garg2023computing} provide a polynomial-time algorithm for this task by using a different approach.
The current state-of-the-art result for bi-valued valuations is provided by \cite{bu2024best}, who present a randomized polynomial-time algorithm achieving Pareto-optimality and strong fairness guarantees.

\paragraph{Personalized bi-valued utilities.}
In summary, checking whether an allocation is Pareto-optimal and finding an EFX allocation are both tractable for bi-valued utilities, but both problems become challenging for tri-valued utilities. 
In this paper, we consider a utility class---\emph{personalized bi-valued utilities}---that is more general than bi-valued utilities.
We say that a valuation profile is personalized bi-valued if for each agent $i$, there exist two values $a_i,b_i$ with $a_i>b_i>0$ such that agent $i$'s valuation on each item can be either $a_i$ or $b_i$. 
The class of bi-valued utilities then becomes a special case with $a_1=\cdots=a_n$ and $b_1=\cdots=b_n$.
Our work pushes the frontier of fair division by analyzing the tractability of Pareto-optimality verification and EFX allocation existence in this more general setting.

\subsection{Our Results}
\label{sec:ourresults}
Our first set of results concerns the computational complexity of deciding whether an allocation is Pareto-optimal.
We first provide a neat characterization for Pareto-optimal allocations in the case where each $r_i$ is an integer (Theorem~\ref{thm:main_po}).
Given this characterization, we show that if each $r_i$ is an integer, checking whether an allocation is Pareto-optimal is in P (Corollary~\ref{cor:main}).
We finally show that for general personalized bi-valued utilities, where $r_i$ may be fractional, the problem is coNP-complete (Theorem~\ref{thm:coNP-completeness}). 

To clearly describe our characterization, we first introduce two typical item-exchange scenarios that yield Pareto-improvements. 
These scenarios differ in the way agents exchange items that they value differently. 

\begin{itemize}
    \item[Type I:] ``small-large exchange'': Agent $i$ gives a good she values as \emph{small} to agent $j$ who values it as \emph{large}. 
To compensate, agent $j$ provides a suitable good in return, which can fall into two cases:
\begin{enumerate}
    \item the returned good is large for agent $i$, increasing her utility, or
    \item the returned good is small for both agents, maintaining agent $i$'s utility while improving agent $j$'s utility.
\end{enumerate}
\item[Type II:] ``one-many exchange'': This exchange occurs only under our personalized bi-valued setting, where agents have different \emph{ratios} of their large to small valuations.
Suppose two agents, $i$ and $j$, satisfy $r_i > r_j$. 
Agent $i$ gives $r_j$ goods, each valued small by both agents, to agent $j$. 
In return, agent $j$ provides a single good valued large by both agents. 
Here, agent $i$ increases her utility by $a_i-b_i\cdot r_j$, while agent $j$'s utility remains unchanged.

Intuitively, Type II exchanges exploit the differences in the agents' valuation \emph{ratios}—agent $i$ benefits by receiving one highly-valued good in exchange for losing several low-valued goods.
\end{itemize}

In both exchanges above, an agent gives an item to another agent who values it as ``large''. In the case that such an item does not exist, the above two types of exchanges can be naturally extended such that the large item is passed from one agent to another through some \emph{intermediate agents}. 
We call this \emph{cyclic extensions} or \emph{indirect extensions} of the two types of exchanges.
\begin{itemize}
    \item For Type I, if the recipient agent $j$ lacks a suitable good to compensate agent $i$, she may do so indirectly via intermediate agents. 
    Specifically, $j$ gives a good (valued large by both) to agent $a_1$, who then passes a similarly valued good to $a_2$, and so on, until agent $a_k$ transfers a suitable item—falling into one of the two cases—to agent $i$. 
    This forms a cycle through which agent $i$ ultimately receives compensation.

    \item For Type II, similarly, the large good can reach agent $i$ indirectly via intermediate agents, forming a cycle. 
    This is necessary if agent $j$ lacks a good that is valued highly by agent $i$.

\end{itemize}

Our characterization for Pareto-optimal allocations with integer $r_i$ shows that the two types of natural Pareto-improvements—along with their cyclic extensions—are sufficient to capture all violations of Pareto-optimality. 
That is, an allocation is not Pareto-optimal if and only if a Pareto-improvement of one of these basic types exists.
This result leads to a simple polynomial-time algorithm for verifying Pareto-optimality, as detecting such improvements reduces to finding cycles in a directed graph. Moreover, it also yields a polynomial-time method for computing a Pareto-optimal allocation that dominates a given input allocation.

Our second result shows that EFX allocations always exist for personalized bi-valued utilities.
Moreover, an EFX allocation can be computed in polynomial time (Sect.~\ref{sec:EFX}).
Our algorithm is an extension of the \emph{Match\&Freeze} algorithm by~\citet{amanatidis2021maximum} with the inclusion of an extra operation ``\emph{Modify}'' that incorporates the feature of personalized bi-valued utilities.
This result was also established independently in a recent work by \citet{byrka2025probingefxpmmsnonexistence}.

\subsection{Related Work}
There is a rich body of literature that deals with fairness and efficiency in the fair division field.
% In this section, we will only focus on papers that concern Pareto-optimality and envy-freeness (including its relaxations).
% We primarily focus on work that studies the existence of allocations that are both (almost) envy-free and Pareto-optimal and the computational complexity of finding such allocations.
We first discuss research on the existence and computational complexity of allocations that are both (approximately) envy-free and Pareto optimal, followed by additional work relevant to our topic.

\paragraph{Existential results.}
We first present results relating to the existence of allocations that are both fair and efficient. 
As shown by~\cite{de2009complexity}, deciding Pareto-optimality under general additive utilities is coNP-complete, and determining whether there exists an allocation that is both envy-free and Pareto-optimal is $\Sigma_2^p$-complete.
For the general additive utilities, \citet{caragiannis2019unreasonable} proved that the allocations with the maximum Nash welfare are both EF1 and Pareto-optimal.
However, EFX and Pareto-optimality are in general incompatible.
Even for tri-valued valuations, \citet{garg2023computing} show that EFX and Pareto-optimal allocations may fail to exist, and deciding the existence of such allocations is NP-hard.

\paragraph{Computational results.}
Then we present results relating to efficient computations of allocations that are both fair and efficient. 
For the general additive utilities, \citet{barman2018finding} presented a pseudo-polynomial method to compute EF1 and PO allocations. 
\citet{garg2024computing} further presents a polynomial-time method to compute EF1 and fractionally~Pareto-optimal (a stronger notion of Pareto-optimality) allocations for \(k\)-ary valuations where each agent has at most \(k\) distinct values. 
As for restricted utility profiles, for binary valuations where the value of goods is either 0 or 1, \citet{halpern2020fair} and \citet{babaioff2021fair} independently show that we can simultaneously achieve EFX and Pareto-optimal via polynomial-time algorithms.
% {\color{red}For \emph{generalized binary} valuations where the value of resource $k$ is either 0 or $p_k$, \citet{camacho2023generalized} give a polynomial-time algorithm that returns an EF1 and Pareto-optimal allocation for both goods and chores. }
For bi-valued valuations, as mentioned before, EFX and Pareto-optimal allocations always exist (Maximum Nash welfare allocations)~\citep{amanatidis2021maximum} and can be computed in polynomial time~\citep{garg2023computing,bu2024best}.

\paragraph{Other related work.}
%Finally, we highlight additional research related to our study. 
The research regarding fair and efficient allocations has also been done with alternative notions such as proportionality, maximin share, or social welfare maximization \citep{bei2012optimal,brams2012maxsum,aumann2015efficiency,bei2021price,li2024complete,barman2020optimal,aziz2020polynomial,aziz2023computing,nguyen2023fewtypes,feldman2024optimal,bu2025approxfairwelfare}. 
%For chores and mixed-manna settings, constraints such as limited chore types or bi-valued disutilities lead to strong fairness–efficiency results
The compatibility of almost envy-freeness and (fractional) Pareto-optimality has also been studied in the setting with the chores (items with negative utilities) and mixed manna (items with both positive and negative utilities) settings with general or bi-valued valuations~\citep{camacho2023generalized,garg2023computing,aziz2023twochoretypes,garg2023chores,lin2025bivaluedchores}.
Finally, due to the restrictiveness of (personalized) bi-valued utilities, we can achieve fairness in more settings where fairness is typically harder to achieve, such as the online setting~\citep{amanatidis2025online2value,song2025onlinemmschores,wang2025online}, the setting with strategic agents~\citep{bu2024truthful,barman2019fair,halpern2020fair,babaioff2021fair}, etc.\section{Preliminaries}
\label{sec:prelim}
Given a positive integer $k$, let $[k]=\{1,\ldots,k\}$ and $[k]_0 = \{0, 1, \ldots, k\}$.

A fair division instance is denoted by $I = (N, M, \{u_i\}_{i\in [n]})$, where $N=[n]$ denotes the agents and $M = [m]$ denotes the indivisible goods (or items, interchangeably) to be allocated. 
Each agent $i \in [n]$ has an \emph{additive} utility function $u_i$, meaning that for every subset of goods $X_i \subseteq M$ allocated to agent $i$, the agent's utility is given by $u_i(X_i) = \sum_{g\in X_i}{u_i(\{g\})}$, where $u_i(\{g\})$ is agent $i$'s value on item $g$. 
We will write $u_i(g)$ instead of $u_i(\{g\})$ for simplicity. 
An allocation is a partition $\mathbf{X} = (X_1, \ldots, X_n)$ of $M$ amongst $n$ agents, where $X_i$ is the bundle allocated to agent $i$.
A partial allocation is a partition of a subset of $M$ amongst $n$ agents. 

In this paper, we focus on the \emph{personalized bi-valued} setting where for each agent $i$, there exist a pair of values $a_i,b_i$ with $a_i>b_i>0$ such that for any $g \in M$, $u_i(g)\in\{a_i, b_i\}$.
For all the theorems, lemmas, and propositions in this paper, valuations are assumed to be personalized bi-valued.
An item $g$ is \emph{large} (\emph{small}, resp.) for agent $i$ if $u_i(g)=a_i$ ($u_i(g)=b_i$, resp.).
We define the value \emph{ratio} of agent $i$ as $r_i = a_i/b_i$.

An allocation is \emph{Pareto-optimal} if there does not exist another allocation making some of the agents better off without making some others worse off. 
To be more specific, an allocation $\mathbf{X}$ is \emph{Pareto-optimal} if there does not exist another allocation $\mathbf{X'}$ satisfying that, 1) for each $i \in [n]$, we have $u_i(X'_i) \geq u_i(X_i)$, and 2) there exists $j \in [n]$ such that $u_j(X'_j) > u_j(X_j)$. 
If such a new allocation exists $\mathbf{X}'$, we say that the original allocation $\mathbf{X}$ is \emph{Pareto-dominated} by $\mathbf{X}'$ and the allocation $\mathbf{X'}$ is a \emph{Pareto-improvement} of $\mathbf{X}$.

For fairness, we use the widely accepted \emph{envy-freeness} concept and its relaxations. 
An allocation $\mathbf{X}$ is said to be: 
\begin{itemize}
    \item \emph{Envy-free (EF)} if for all $i, j \in N$, it satiesfies that $u_i (X_i) \geq u_i(X_j)$
    \item \emph{Envy-free up to any good (EFX)} if for all $i, j \in N$, for any good $g \in X_j$, it satisfies that $u_i(X_i) \geq u_i(X_j\backslash \{g\})$.
\end{itemize}

Considering the key properties of allocations discussed above, we present a lemma demonstrating the equivalence of allocations under a transformation of utility profiles:

\begin{lemma}[Equivalence under Ratio-Based Utilities]
\label{lemma:equivalence-ratio}
    An allocation is EFX (or Pareto-optimal) under the personalized bi-valued utility profile $\{(a_i,b_i)\}_{i=1,\ldots,n}$ if and only if it is EFX (or Pareto-optimal) under the profile $\{(r_i,1)\}_{i=1,\ldots,n}$ where $r_i=a_i/b_i$.
\end{lemma}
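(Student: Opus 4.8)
The plan is to observe that passing from the profile $\{(a_i,b_i)\}$ to $\{(r_i,1)\}$ is nothing more than rescaling each agent's utility function by a positive constant, and then to note that both EFX and Pareto-optimality are invariant under such per-agent rescaling. Concretely, for each agent $i$ I would define $u_i'$ to be the utility under the normalized profile, so that $u_i'(g) = u_i(g)/b_i$ for every good $g$: indeed $a_i/b_i = r_i$ and $b_i/b_i = 1$. By additivity, $u_i'(S) = u_i(S)/b_i$ for every bundle $S \subseteq M$, so that $u_i'$ is obtained from $u_i$ by multiplying through by the positive scalar $1/b_i$.

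The key structural point is that every inequality appearing in the definitions of EFX and of Pareto-domination involves the utility function of a single agent only. I would first handle EFX: fix agents $i,j$ and a good $g \in X_j$. Dividing both sides of $u_i(X_i) \geq u_i(X_j \setminus \{g\})$ by $b_i > 0$ yields the equivalent inequality $u_i'(X_i) \geq u_i'(X_j \setminus \{g\})$, and multiplying back by $b_i$ recovers the original. Hence the full set of EFX constraints holds under $\{(a_i,b_i)\}$ if and only if it holds under $\{(r_i,1)\}$.

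For Pareto-optimality, I would argue that the set of allocations Pareto-dominating a fixed allocation $\mathbf{X}$ is identical under the two profiles. An allocation $\mathbf{X}'$ dominates $\mathbf{X}$ exactly when $u_i(X_i') \geq u_i(X_i)$ for all $i$ and $u_j(X_j') > u_j(X_j)$ for some $j$. Since multiplication by the positive constant $1/b_i$ preserves both weak inequalities ($\geq$) and strict inequalities ($>$), each such condition is equivalent to its normalized counterpart $u_i'(X_i') \geq u_i'(X_i)$ and $u_j'(X_j') > u_j'(X_j)$. Therefore $\mathbf{X}$ admits a Pareto-improvement under one profile precisely when it admits one under the other, and Pareto-optimality is preserved.

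I do not expect any genuine obstacle here beyond bookkeeping; the statement is a direct consequence of the scale-invariance of the two criteria. The one point deserving explicit care is the observation that both notions are defined purely through \emph{within-agent} comparisons, so that rescaling each agent independently—by a possibly different factor $1/b_i$—leaves every relevant inequality intact. A criterion that instead compared utilities across agents on a common scale (for example, utilitarian welfare $\sum_i u_i$) would not enjoy this invariance, which is why it is worth isolating this separability as the crux of the argument.
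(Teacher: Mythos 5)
Your proposal is correct and follows essentially the same route as the paper's proof: rescale each agent's utility by the positive factor $1/b_i$, use additivity to get $u_i'(S)=u_i(S)/b_i$ for all bundles, and observe that every inequality in the EFX and Pareto-domination definitions is a within-agent comparison preserved by positive scaling. Your remark isolating the separability (per-agent rescaling vs.\ cross-agent criteria like utilitarian welfare) is a nice clarification but does not change the substance of the argument.
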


\begin{proof}
    Let $\{v_i\}_{i=1,\ldots,n}$ denote the profile $\{(a_i,b_i)\}_{i=1,\ldots,n}$ and $\{u_i\}_{i=1,\ldots,n}$ denote the profile  $\{(r_i,1)\}_{i=1,\ldots,n}$. It is straightforward to see that $v_i(S)=b_i\cdot u_i(S)$ holds for any $S\subseteq M$ and any agent $i$.
    Therefore, the EFX criteria $v_i(X_i)\geq v_i(X_j\setminus\{g\})$ and $u_i(X_i)\geq u_i(X_j\setminus\{g\})$ are equivalent. 
    
    Similarly, the Pareto-optimal criteria--"there does not exist another allocation $X'$ such that for all $i$, $v_i(X'_i) \geq v_i(X_i)$ and exists some $j$, $v_j(X'_j) > v_j(X_j)$" and "there does not exist another allocation $X'$ such that for all $i$, $u_i(X'_i) \geq u_i(X_i)$ and exists some $j$, $u_j(X'_j) > u_j(X_j)$"--are equivalent. 

    Hence we substitute $\{u_i\}_{i\in [n]}$ with $\{r_i\}_{i\in [n]}$ when discussing the topics mentioned above. 
\end{proof}

In the remaining part of this paper, we will use $r_i$ and $1$ instead of $a_i$ and $b_i$ to denote the values of a large item and a small item for agent $i$.

\section{On Pareto-Optimal Allocations}
\label{sec:PO}
This section focuses on Pareto-optimal allocations. 
We will prove that the two types of Pareto-improvements, and their cyclic extensions, are sufficient to cover all possible Pareto-improvements, if each $r_i$ is an integer.
To formally state this characterization, we begin by introducing the \emph{minimum Pareto-improvement graph}, which captures the minimal item exchanges required for Pareto-improvements, and we establish several basic properties of this graph.

In Sect.~\ref{sec:PO_integer}, we present our main characterization result (Theorem~\ref{thm:main_po}) for the special case where each ratio $r_i$ is an integer. Specifically, we show that the minimum Pareto-improvement graph must take one of the two ``cyclic'' forms described in Sect.~\ref{sec:ourresults}. This characterization leads to a polynomial-time algorithm for verifying whether a given allocation is Pareto-optimal (Corollary~\ref{cor:main}).

The proof of Theorem~\ref{thm:main_po} is given in Sect.~\ref{sec:PO_integer_proof}. Finally, in Sect.~\ref{sec:coNP-complete}, we show that, when $r_i$ values are allowed to be fractional, verifying Pareto-optimality becomes coNP-complete.

We first define some graphs for future analysis. 
To begin with, we model the item exchange process during Pareto-improvements with the \emph{item-exchange graph}.

\begin{definition}[Item-Exchange Graph]
\label{def:Item-exchange Graph}
Given two allocations $\mathbf{X}^1$ and $\mathbf{X}^2$ of the same set of items $M$ among $n$ agents, the \emph{item-exchange graph} $H(\mathbf{X}^1, \mathbf{X}^2) = (N, W)$ is a multi-directed graph representing the item exchanges needed to transform $\mathbf{X}^1$ into $\mathbf{X}^2$.

The vertex set $N$ corresponds to the agents, and the edge set $W$ contains a directed edge $(i, j)$ for each item $w \in M$ that is held by agent $i$ in $\mathbf{X}^1$ and by agent $j$ in $\mathbf{X}^2$ (with $i \ne j$). That is,
\[
W = \{ (i, j) \mid w \in X^1_i \cap X^2_j,~ i \ne j \}.
\]
Multiple edges between the same pair of agents may exist if multiple items are transferred between them.

Each edge is labeled with the corresponding item and classified into one of four types, based on how the source and target agents value the item:
\begin{itemize}
    \item \textbf{LS} (Large-to-Small): the item is large for agent $i$ and small for agent $j$.
    \item \textbf{SS} (Small-to-Small): the item is small for both $i$ and $j$.
    \item \textbf{SL} (Small-to-Large): the item is small for $i$ and large for $j$.
    \item \textbf{LL} (Large-to-Large): the item is large for both $i$ and $j$.
\end{itemize}
\end{definition}

We then define the ``minimum'' Pareto-improvements of a Pareto-dominated allocation. 
\begin{definition}[Minimum Pareto-improvement Graph]
    Given a Pareto-dominated allocation $\mathbf{X}$, the \emph{minimum Pareto-improvement graph} $H_{\min}(\mathbf{X})$ is an item-exchange graph $H(\mathbf{X},\mathbf{X}')$ such that $\mathbf{X}'$ is a Pareto-improvement of $\mathbf{X}$ that minimizes the number of the edges in $H(\mathbf{X},\mathbf{X}')$.
\end{definition}

We then present several propositions concerning the properties of the \emph{minimum Pareto-improvement graph}.

In a directed graph, we call a node a \emph{source} if its in-degree is $0$, and we call a node a \emph{sink} if its out-degree is $0$.

\begin{proposition}
\label{prop:no-source-no-sink}
    Given any allocation $\mathbf{X}$, $H_{\min}(\mathbf{X})$ contains no source and sink node. 
\end{proposition}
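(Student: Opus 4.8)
The plan is to prove the two halves separately --- no (genuine) source and no (genuine) sink --- where by a \emph{source} I mean a vertex incident to at least one edge whose in-degree is $0$, and similarly for a \emph{sink}. (A vertex with no incident edges corresponds to an agent who does not participate in the exchange and is irrelevant here.) Write $\mathbf{X}'$ for the Pareto-improvement realizing $H_{min}(\mathbf{X}) = H(\mathbf{X}, \mathbf{X}')$. Throughout I would repeatedly use that, under personalized bi-valued utilities, every item has strictly positive value for every agent, so that dropping an item from a bundle strictly decreases its owner's utility and adding one strictly increases it.

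The source case is immediate and uses only Pareto-feasibility, not minimality. If a vertex $i$ has in-degree $0$ and out-degree at least $1$, then in passing from $\mathbf{X}$ to $\mathbf{X}'$ agent $i$ receives no item but gives away at least one. Hence $X_i' \subsetneq X_i$ and $u_i(X_i') < u_i(X_i)$, contradicting the requirement $u_i(X_i') \ge u_i(X_i)$ that every agent be weakly better off in a Pareto-improvement.

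The sink case is the substantive part and is where minimality enters. Suppose $j$ is a sink: out-degree $0$ and in-degree at least $1$. Out-degree $0$ gives $X_j \subseteq X_j'$, so $u_j(X_j') = u_j(X_j) + u_j(X_j' \setminus X_j)$ with the received set $X_j' \setminus X_j$ nonempty. I would pick one received item $w$, coming from its original owner $i$ (so $w \in X_i$, $w \in X_j'$, $i \ne j$, and thus $w \notin X_i'$), and define $\mathbf{X}''$ to agree with $\mathbf{X}'$ except that $w$ is returned to $i$, i.e. $X_i'' = X_i' \cup \{w\}$ and $X_j'' = X_j' \setminus \{w\}$. One then checks that $\mathbf{X}''$ is still a Pareto-improvement of $\mathbf{X}$: agent $i$ has $u_i(X_i'') = u_i(X_i') + u_i(w) > u_i(X_i)$, so the strict-improvement condition holds via $i$; agent $j$ has $u_j(X_j'') = u_j(X_j) + u_j\big((X_j' \setminus X_j) \setminus \{w\}\big) \ge u_j(X_j)$; and all other agents are untouched. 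Since $\mathbf{X}''$ differs from $\mathbf{X}'$ only in the placement of $w$, which now remains with its original owner and hence contributes no edge, the graph $H(\mathbf{X}, \mathbf{X}'')$ has exactly one fewer edge than $H(\mathbf{X}, \mathbf{X}')$, contradicting the minimality defining $H_{min}(\mathbf{X})$.

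The main obstacle, and essentially the only place care is needed, is verifying that returning $w$ to $i$ does not push agent $j$ below her original utility; this is exactly where the sink property $X_j \subseteq X_j'$ together with positivity of valuations is essential, since it guarantees that even after removing $w$ agent $j$ still retains all of her original items plus possibly further received ones. The source direction, by contrast, requires nothing beyond the positivity of the utilities.
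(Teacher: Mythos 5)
Your proof is correct and follows essentially the same route as the paper's: the source case is ruled out by positivity of valuations together with the requirement that no agent's utility decreases, and the sink case by deleting incoming edges (the paper removes all of them at once, you remove one, but the minimality contradiction is identical). Your version just spells out the bookkeeping more explicitly, including the harmless convention that isolated vertices are not counted as sources or sinks.
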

\begin{proof}
In a Pareto-improvement, no agent’s utility can decrease, so source nodes cannot exist in $H_{\min}(\mathbf{X})$. 
All incoming edges to a sink node can be removed without affecting the improvement, contradicting the minimality of $H_{\min}(\mathbf{X})$.
\end{proof}

\begin{proposition}
\label{prop:no-LS-edge}
    Given any allocation $\mathbf{X}$, $H_{\min}(\mathbf{X})$ contains no LS (Large-to-Small) edges.
\end{proposition}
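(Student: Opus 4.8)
# Proof Proposal for Proposition~\ref{prop:no-LS-edge}

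The plan is to argue by contradiction, exploiting the minimality of $H_{min}(\mathbf{X})$: I will assume an LS edge exists and show that the underlying Pareto-improvement $\mathbf{X}'$ can be modified into another Pareto-improvement whose item-exchange graph uses strictly fewer edges, contradicting the definition of $H_{min}(\mathbf{X})$. Recall that an LS edge $(i,j)$ corresponds to an item $w$ that agent $i$ values \emph{large} ($r_i$) but agent $j$ values \emph{small} ($1$). The key intuition is that such a transfer is wasteful: agent $i$ loses a large amount while agent $j$ gains only a small amount, so the improvement should be achievable without moving this item.

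The core idea is to undo the LS transfer. First I would take the Pareto-improvement $\mathbf{X}'$ realizing $H_{min}(\mathbf{X})$ and consider the item $w$ corresponding to the LS edge $(i,j)$, which moves from $i$ to $j$. I propose to keep $w$ with agent $i$ instead of sending it to $j$. This deletes the edge $(i,j)$ from the graph but leaves agent $i$ holding an item she values as $r_i$ rather than losing it. Since $r_i > 1$, retaining $w$ can only help agent $i$, so her utility does not drop; the concern is whether removing $w$ from agent $j$'s bundle harms $j$. Because $j$ values $w$ only as small (value $1$), agent $j$ loses just $1$ unit. I would then need to compensate $j$, and here I would use the fact (from Proposition~\ref{prop:no-source-no-sink}) that every node, including $j$, has nonzero out-degree in the graph; alternatively, I would show directly that the single unit lost by $j$ can be absorbed. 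The cleanest route is to observe that in the modified allocation, agent $i$ now has an \emph{extra} item compared to $\mathbf{X}'$, and the item count is unbalanced, so I can reallocate: rather than compensating, I argue that the modified exchange (with the LS edge simply deleted and $w$ returned to $i$'s original holdings) still constitutes a valid, possibly partial, redistribution.

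The main obstacle I anticipate is handling the bookkeeping when agent $j$ genuinely needs the unit of value from $w$ to meet the no-decrease condition $u_j(X_j') \geq u_j(X_j)$. To resolve this, I would examine what agent $j$ gives away in $H_{min}(\mathbf{X})$: since $w$ was something $j$ only valued at $1$, and in the Pareto-improvement $j$ must not lose utility, any item $j$ surrendered must be compensated by incoming value. The careful argument is to show that deleting the LS edge and instead letting $w$ stay at $i$ does not create a net deficit for $j$ once we also cancel a corresponding low-value item that $j$ would have passed along. I expect to formalize this by a local surgery on the edge set: remove the LS edge $(i,j)$ together with exactly one outgoing small-valued edge of $j$ (which must exist and whose removal returns the corresponding unit of value to $j$), net-reducing the edge count by at least one while preserving all agents' utilities weakly and keeping at least one agent strictly improved. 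This contradicts minimality and completes the proof.

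The remaining subtlety is ensuring the strict-improvement witness is not destroyed by the surgery; I would handle this by noting that the strictly-improved agent (guaranteed to exist in any Pareto-improvement) is untouched by the local operation at $i$ and $j$, or by rerouting so that the strict gain is preserved. If all cases are covered, the edge-count strictly decreases, contradicting the minimality of $H_{min}(\mathbf{X})$, and therefore no LS edge can appear.
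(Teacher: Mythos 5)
Your opening move is the same as the paper's: keep the LS item $g$ with agent $i$, delete the edge, and repair locally. But your repair step has a genuine gap. You propose to compensate agent $j$ for the lost unit by cancelling ``exactly one outgoing small-valued edge of $j$ (which must exist).'' Neither half of that parenthetical is justified. First, $j$ need not have any outgoing edge for an item she values as small: she could be giving away only items she values as large, with her utility maintained by several incoming items (of which $w$ is one). Second, even when such an outgoing edge exists, cancelling it withdraws that item from its intended recipient, whose utility may then fall below its level in $\mathbf{X}$; fixing that recipient forces you into a cascading/recursive surgery that you neither carry out nor show terminates. Your alternative suggestion---that the modified exchange is ``a valid, possibly partial, redistribution'' with unbalanced item counts---does not resolve this either, since $j$ can genuinely need the one unit of value from $w$ to satisfy $u_j(X'_j)\geq u_j(X_j)$.

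The paper's proof closes this gap by looking at the \emph{other} endpoint of the LS edge. Since agent $i$ loses an item worth $r_i>0$ in the Pareto-improvement, she must have an incoming edge delivering some item $h$ from some agent $j'$ (possibly $j$ itself). The surgery is then: keep $g$ with $i$ and reroute $h$ to $j$ instead of to $i$. Agent $i$ trades receiving $h$ (worth at most $r_i$ to her) for keeping $g$ (worth exactly $r_i$), so she is weakly better off; agent $j$ receives $h$ (worth at least $1$ to her) in place of $g$ (worth exactly $1$), so she is weakly better off; $j'$ and everyone else are untouched, and the strict improvement elsewhere survives because no agent's utility decreases relative to $\mathbf{X}'$. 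The edge count drops by at least one, contradicting minimality. The key idea you are missing is to exploit $i$'s guaranteed incoming compensation rather than $j$'s (unguaranteed) outgoing small items; with that substitution your argument becomes the paper's.
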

\begin{proof}
Suppose for contradiction that $H_{\min}(\mathbf{X})=H(\mathbf{X},\mathbf{X}')$ contains an LS edge corresponding to item $g$ transferred from agent $i$ (who values $g$ as large) to agent $j$ (who values it as small). In a Pareto-improvement, agent $i$ must be compensated by receiving another item $h$—--from some agent $j'$ (possibly $j$)—--to avoid a utility loss.

Now consider modifying the exchange: instead of transferring $g$ from $i$ to $j$ and $h$ from $j'$ to $i$, we transfer only $h$ from $j'$ to $j$, leaving $g$ with agent $i$. 
Let $\mathbf{X}''$ be the resultant new allocation.

We claim that $\mathbf{X}''$ is also a Pareto-improvement over $\mathbf{X}$. In $\mathbf{X}'$, agent $i$ loses a large item $g$ and gains $h$ (which may be small or large), whereas, in $\mathbf{X}''$, $i$ keeps $g$ and receives nothing—so their utility is weakly higher in $\mathbf{X}''$. For agent $j$, item $g$ (small to them) is replaced with $h$ (which may be small or large), so their utility is also weakly higher. All other agents are unaffected.

Since $\mathbf{X}''$ results in a Pareto-improvement with fewer edges than $\mathbf{X}'$, this contradicts the minimality of $H_{\min}(\mathbf{X})$.
\end{proof}

\begin{proposition}
\label{prop:no-SS-path}
    Given any allocation $\mathbf{X}$, $H_{\min}(\mathbf{X})$ contains no two consecutive SS (Small-to-Small) edges.
\end{proposition}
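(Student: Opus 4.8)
The plan is to argue by contradiction using the same local-modification strategy as in Proposition~\ref{prop:no-LS-edge}: assume $H_{min}(\mathbf{X}) = H(\mathbf{X}, \mathbf{X}')$ contains two consecutive SS edges and construct an alternative Pareto-improvement with strictly fewer edges. Suppose item $g$ is transferred along an SS edge from agent $i$ to agent $j$, and item $h$ is transferred along an SS edge from $j$ to agent $k$ (where $k = i$ is possible). By definition both items are small for all the agents involved, so $u_i(g) = 1$, $u_j(g) = 1$, $u_j(h) = 1$, and $u_k(h) = 1$.

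First I would observe that the two consecutive SS edges form a length-$2$ path $i \to j \to k$ through the intermediate agent $j$, and that the item $g$ arriving at $j$ and the item $h$ leaving $j$ contribute identically ($+1$ and $-1$ respectively) to agent $j$'s utility. The key idea is to \emph{bypass} agent $j$: instead of sending $g$ from $i$ to $j$ and $h$ from $j$ to $k$, I would send $g$ directly from $i$ to $k$, leaving $h$ with agent $j$. Let $\mathbf{X}''$ denote the resulting allocation. I then need to check that $\mathbf{X}''$ is still a Pareto-improvement of $\mathbf{X}$ and that $H(\mathbf{X}, \mathbf{X}'')$ uses fewer edges than $H(\mathbf{X}, \mathbf{X}')$.

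The utility bookkeeping is the routine part. Agent $i$ loses $g$ in both $\mathbf{X}'$ and $\mathbf{X}''$, so its utility is unchanged between the two. Agent $k$ receives $g$ in $\mathbf{X}''$ instead of $h$ in $\mathbf{X}'$, but since $u_k(g) = u_k(h) = 1$ (both small for $k$), its utility is identical. Agent $j$ now retains $h$ (which it would otherwise have passed on) and does not receive $g$; since $u_j(g) = u_j(h) = 1$, its net utility is again unchanged relative to $\mathbf{X}'$. All other agents are untouched. Hence every agent's utility in $\mathbf{X}''$ equals that in $\mathbf{X}'$, so $\mathbf{X}''$ is also a Pareto-improvement of $\mathbf{X}$. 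But the rerouting eliminates the two edges $i \to j$ (for $g$) and $j \to k$ (for $h$) and introduces at most the single edge $i \to k$ (for $g$), reducing the edge count by at least one. This contradicts the minimality of $H_{min}(\mathbf{X})$.

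The one subtlety to handle carefully is the degenerate case $k = i$: then the rerouting returns $g$ to agent $i$, so $g$ is no longer transferred at all and the edge $i \to i$ does not appear in the item-exchange graph (which by Definition~\ref{def:Item-exchange Graph} excludes self-loops, $i \ne j$). In this case we delete \emph{both} SS edges and add none, so the improvement still holds with strictly fewer edges. I expect this boundary case to be the only genuine obstacle; the main line of the argument is a direct adaptation of the compensating-exchange technique already established for LS edges, and no new machinery is required.
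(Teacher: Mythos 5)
Your proposal is correct and follows essentially the same bypass argument as the paper's proof: delete both SS edges, route $g$ directly from $i$ to $k$, and leave $h$ with the intermediate agent $j$. One small overclaim in your bookkeeping: you assert $u_k(g)=1$, which the two SS edges do not guarantee (they only constrain how $i$ and $j$ value $g$, not how $k$ does); but since $u_k(g)\geq 1=u_k(h)$ in any case, agent $k$'s utility can only weakly increase, and the argument — as the paper itself notes for exactly this reason — still goes through.
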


\begin{proof}
Suppose for contradiction that $H_{\min}(\mathbf{X})$ contains two consecutive SS edges: item $g_1$ transferred from agent $i_1$ to $i_2$, and item $g_2$ from $i_2$ to $i_3$, where both items are valued as small by their respective senders and receivers.

We construct a modified exchange by removing both edges and directly transferring item $g_1$ from $i_1$ to $i_3$, leaving $g_2$ with $i_2$. Since $g_1$ is small for both $i_1$ and $i_2$, and $g_2$ is small for both $i_2$ and $i_3$, this reassignment preserves or increases the utility of each involved agent. In particular, agent $i_3$ may value $g_1$ more than $g_2$, so their utilities do not decrease.

Thus, the resulting allocation remains a Pareto-improvement with fewer edges, contradicting the minimality of $H_{\min}(\mathbf{X})$.
\end{proof}

We finally define the \emph{large-item transfer possibility graph}, which captures potential large-item transfers between agents under a given allocation.

\begin{definition}[Large-Item Transfer Possibility Graph]
\label{def:LTP-graph}
Given an allocation $\mathbf{X}$, the \emph{large-item transfer possibility graph} is a directed graph $I(\mathbf{X}) = (V, E)$, where $V$ is the set of agents and
\[
E = \{ (i, j) \in V \times V : \exists g \in X_i \text{ such that } u_j(g) = r_j \}.
\]
\end{definition}

In $I(\mathbf{X})$, an edge $(i, j) \in E$ indicates that agent $i$ holds an item that agent $j$ values as large, meaning that it is possible for agent $i$ to transfer a large-valued (with respect to agent $j$) item to agent $j$.

\subsection{Special Case Where $r_i$'s Are Integers}
\label{sec:PO_integer}
As mentioned in Sect.~\ref{sec:ourresults}, under the personalized bi-valued setting, if $r_i$ is an integer for every agent $i$, Pareto-optimal allocations have a nice structure: for any Pareto-dominated allocation, there always exist two specific types of Pareto-improvements.
The intuition given in Sect.~\ref{sec:ourresults} is formally stated in the theorem below.

In this section, we will see that our characterization implies a polynomial-time algorithm to decide if an allocation is Pareto-optimal.
In Sect.~\ref{sec:PO_integer_proof}, we will prove Theorem~\ref{thm:main_po}.

\begin{theorem}
\label{thm:main_po}
Given a Pareto-dominated allocation $\mathbf{X}$ with integer-ratio utilities $\{r_i\}_{i \in [n]}$, the corresponding minimum Pareto-improvement graph $H_{\min}(\mathbf{X})$ must fall into one of the following two structural categories (where ``$\xrightarrow{\text{LL}} \cdots \xrightarrow{\text{LL}}$'' denotes a path consisting of one or more LL edges):
\begin{itemize}
    \item \textbf{Type I: Small-to-Large Exchange Cycle} \\
    $H_{\min}(\mathbf{X})$ forms a directed cycle:
    \[
    H_1: a_1 \xrightarrow{\text{SL}} a_2 \xrightarrow{\text{LL}} \cdots \xrightarrow{\text{LL}} a_k \xrightarrow{\text{LL/SL/SS}} a_1
    \]
    
    \item \textbf{Type II: One-to-Many Exchange Cycle} \\
    $H_{\min}(\mathbf{X})$ has the following structure:
    \[
    H_2: a_k \xrightarrow{r_1 \times \text{SS}} a_1 \xrightarrow{\text{LL}} \cdots \xrightarrow{\text{LL}} a_{k-1} \xrightarrow{\text{LL}} a_k
    \]
    where $r_1 < r_k$, and $\xrightarrow{r_1 \times \text{SS}}$ denotes $r_1$ SS edges from $a_k$ to $a_1$.
\end{itemize}
\end{theorem}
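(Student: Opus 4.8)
# Proof Proposal for Theorem~\ref{thm:main_po}

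The plan is to analyze the structure of $H_{\min}(\mathbf{X})$ by first establishing that it must be a single simple cycle, and then classifying that cycle into the two types based on its edge composition. The starting point is to combine the three propositions already proved: by Proposition~\ref{prop:no-source-no-sink} the graph has no source and no sink, so every vertex has both positive in-degree and positive out-degree; by Proposition~\ref{prop:no-LS-edge} there are no LS edges; and by Proposition~\ref{prop:no-SS-path} there are no two consecutive SS edges. My first move would be to argue that minimality forces $H_{\min}(\mathbf{X})$ to be a (vertex-)disjoint union that reduces to a single cycle. Since there are no sources or sinks, the graph must contain a directed cycle; if it contained more than one connected component that each induce a Pareto-improvement, or any edge not lying on the ``active'' cycle, we could delete the superfluous edges and still retain a Pareto-improvement, contradicting edge-minimality. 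So the essential content is that $H_{\min}(\mathbf{X})$ is exactly one directed cycle $a_1 \to a_2 \to \cdots \to a_k \to a_1$, possibly with the caveat that Type II carries parallel (multi-)edges on one arc.

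Next I would classify the single cycle by counting, along the cycle, how each agent's utility changes. For each vertex $a_\ell$ on the cycle, the net utility change is (value of items received) minus (value of items given), measured in the agent's own $\{r_\ell, 1\}$ scale via Lemma~\ref{lemma:equivalence-ratio}. Pareto-improvement requires every net change to be $\geq 0$ and at least one to be $> 0$. Using the allowed edge labels (only LL, SL, SS remain after ruling out LS), I would track what each label contributes to the sender and receiver. An SL edge strictly helps its receiver (gains a large item, $+r$) and costs its sender only $-1$; an LL edge is utility-neutral for both endpoints ($-r$ for sender, $+r$ for receiver, same ratio applied to the endpoints' own scales is the subtle point); an SS edge is neutral in the shared ratio-1 currency. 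The key bookkeeping is that around a cycle, each vertex has exactly one outgoing and one incoming edge in the single-cycle case, \emph{except} for the Type II vertex that emits $r_1$ parallel SS edges.

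The main obstacle, and where I expect to spend the most care, is showing that only these two edge-patterns yield a \emph{balanced} cycle (nobody worse off) that is also \emph{strictly improving} somewhere, \emph{and} that such a cycle actually exists whenever $\mathbf{X}$ is Pareto-dominated. The forward direction (every Pareto-improvement of this minimal form is Type I or II) is a finite case analysis: I would show that if the cycle contains exactly one SL edge and otherwise LL edges (with the closing edge possibly SL/SS/LL), the incoming SL recipient gains while everyone on the LL segment breaks even, giving Type I; and that the \emph{only} way to let a single agent gain by an SS-for-LL swap without any SL edge is the one-to-many ratio exploitation, which is exactly where the integrality of $r_i$ enters---giving up one LL item (worth $r_j$ to the high-ratio agent) must be exactly compensated by $r_j$ incoming SS items so that the intermediate agent breaks even, forcing the multiplicity $r_1$ and the condition $r_1 < r_k$ that makes the originating agent strictly better off. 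The integrality assumption is essential here because the number of parallel SS edges must be an integer equal to the ratio; without it, the compensating bundle of small items cannot exactly match the value of the large item, which foreshadows the coNP-completeness in the fractional case. I would handle the existence direction by taking any Pareto-improvement, passing to its item-exchange graph, iteratively applying the three reduction operations from the propositions' proofs until no further reduction is possible, and verifying the terminal object has one of the two shapes.
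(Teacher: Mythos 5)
Your overall plan---reduce $H_{\min}$ to a single cycle using Propositions~\ref{prop:no-source-no-sink}--\ref{prop:no-SS-path} and then classify the cycle by a utility-balance count---has a genuine gap at its very first step, and that step is where most of the actual work in the theorem lives. You assert that minimality forces $H_{\min}(\mathbf{X})$ to be ``exactly one directed cycle,'' justified by the claim that any edge not on the active cycle could be deleted while retaining a Pareto-improvement. That deletion argument does not work: removing an edge from a Pareto-improvement can strictly decrease the utility of the agent who was receiving that item, so the residual graph need not encode a Pareto-improvement at all. A priori, $H_{\min}$ can be a multigraph in which a single agent gives away several large items to different agents and is compensated by small items arriving from several other agents; none of the three propositions rules this out, and no local edge-deletion reduces it to a cycle. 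Your proposed ``existence direction'' (iterate the reduction operations from the propositions until stuck and inspect the terminal object) inherits the same problem---a graph with no source, no sink, no LS edge, and no two consecutive SS edges is still far from being a single cycle.

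The paper closes this gap with two different arguments depending on whether an SL edge is present. If $H_{\min}$ contains an SL edge, it extracts a cycle through that edge (no source/sink guarantees one exists) and then shows that \emph{shortcutting} at the first non-LL edge---rerouting that item directly back to the cycle's origin---is itself verified to be a Pareto-improvement with fewer edges, so minimality forces the $H_1$ shape. If there is no SL edge, the paper does \emph{not} claim a cycle structure; instead it classifies agents into net large-item givers $L$ and net large-item receivers $S$, aggregates the exchanges into a bipartite multigraph $B_{\min}$, and runs a global averaging argument on a surplus function $\mathsf{surplus}(s_j) = c_j - \sum w_{ij} r_{l_i}$, whose nonnegative sum guarantees some $s_j$ with an incident edge $(l_i,s_j)$ satisfying $c_j \geq r_{l_i}$; from that pair one either extracts an $H_2$ subgraph or contradicts minimality when $r_{l_i} \geq r_{s_j}$. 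This counting argument is the idea your proposal is missing. Your identification of where integrality enters (the $r_1$ parallel SS edges must exactly compensate one large item) is correct and matches the paper, but it only becomes usable after the structural reduction has been established by the bipartite surplus argument rather than by a single-cycle assumption.
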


\noindent
Both structures described above represent item exchanges that constitute valid Pareto-improvements:

In \textbf{Type I}, the cycle ensures that one agent's utility strictly increases without harming others:
If the closing edge $(a_k, a_1)$ is LL, then $a_1$ strictly gains;
If it is SL or SS, then $a_k$ strictly gains.
    
In \textbf{Type II}, agent $a_k$ gives $r_1$ small-valued items to $a_1$, and receives a single large-valued item through the LL path from $a_{k-1}$. Since $r_k > r_1$, this exchange strictly increases $a_k$’s utility while leaving other agents' utilities unchanged.

With Theorem~\ref{thm:main_po}, we can then prove that it is possible to determine whether an allocation with integer-ratio utilities can be decided in polynomial time.

% appendix
\begin{corollary}\label{cor:main}
    Given an allocation $\mathbf{X}$ with integer-ratio utilities $\{r_i\}_{i\in [n]}$, we can decide in polynomial time if $\mathbf{X}$ is Pareto-optimal.
    In addition, if $\mathbf{X}$ is not Pareto-optimal, we can find in polynomial time a Pareto-optimal allocation $\mathbf{X}'$ that Pareto-dominates $\mathbf{X}$.
\end{corollary}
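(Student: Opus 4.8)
The plan is to use Theorem~\ref{thm:main_po} to reduce Pareto-optimality verification to a purely graph-theoretic search. By Lemma~\ref{lemma:equivalence-ratio} I work with the normalized values $(r_i,1)$, so all utilities are integers. The key equivalence I would establish first is: an allocation $\mathbf{X}$ with integer ratios fails to be Pareto-optimal if and only if a Type~I or Type~II cycle is present in the exchange graphs. One direction is immediate from the remarks following Theorem~\ref{thm:main_po}---both structures are explicitly valid Pareto-improvements, so exhibiting one certifies that $\mathbf{X}$ is not Pareto-optimal. The converse is exactly Theorem~\ref{thm:main_po}: if $\mathbf{X}$ is Pareto-dominated, then $H_{\min}(\mathbf{X})$ itself realizes a Type~I or Type~II cycle. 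Hence deciding Pareto-optimality amounts to deciding whether any such cycle exists, and a useful flexibility is that I may detect any superset of these structures as long as each detected structure is itself a genuine Pareto-improvement.

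Next I would give the detection procedures. For the ``large-item rotation'' improvements (Type~I with an LL- or SL-closing edge) I work in the large-item transfer possibility graph $I(\mathbf{X})$ of Definition~\ref{def:LTP-graph}, labeling each edge $(i,j)$ as SL or LL according to whether the transferred item is small or large for $i$. A simple directed cycle in $I(\mathbf{X})$ is a rotation in which every agent receives a large item; an agent's utility is unchanged when it ships a large item (LL) and strictly increases when it ships a small one (SL), so such a cycle is a Pareto-improvement precisely when it contains at least one SL edge. This I test by checking, for each SL edge $(u,v)$, whether $u$ is reachable from $v$ in $I(\mathbf{X})$---a polynomial number of reachability queries. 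The remaining structures (Type~I with an SS-closing edge, and Type~II) both push a chain of large items along an LL-path and close it by small items returned to the path's origin. For these I enumerate ordered pairs $(u,w)$ and test (i) existence of an LL-path from $u$ to $w$ (reachability in the LL-subgraph) and (ii) whether $w$ holds enough items that are small for both $w$ and $u$: a single such item closes a Type~I SS cycle seeded by an SL edge out of $u$, while $r_u$ such items together with $r_u<r_w$ yield a Type~II cycle. Each test is a reachability computation plus a count bounded by $m$, so the whole decision procedure runs in polynomial time; comparing the count against $r_u$ is safe even when $r_u$ is given in binary, since $r_u>m$ simply rules the structure out.

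To produce a dominating Pareto-optimal allocation when $\mathbf{X}$ is not Pareto-optimal, I would iterate: detect one of the above improvements, apply the corresponding exchange to obtain an allocation that Pareto-dominates the current one, and repeat until none is found---at which point Theorem~\ref{thm:main_po} guarantees Pareto-optimality, and transitivity of Pareto-domination guarantees the final allocation dominates the original $\mathbf{X}$. The main obstacle is bounding the number of iterations: each exchange raises the welfare $\sum_i u_i(X_i)$ by at least $1$, but this welfare can reach $m\cdot\max_i r_i$, which is exponential when the $r_i$ are encoded in binary, so the welfare bound does not suffice. I would instead use the rank-based potential $\Psi(\mathbf{X})=\sum_i (\text{number of items large for } i \text{ in } X_i)\cdot\mathrm{rank}(r_i)$, where $\mathrm{rank}(r_i)\in\{1,\dots,n\}$ is the position of $r_i$ among the distinct ratios. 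A short telescoping case analysis of the exchange types shows $\Psi$ strictly increases by at least $1$ at each step---a Type~I exchange nets a new large-for-holder slot, and a Type~II exchange moves a large slot from an agent of smaller ratio to one of strictly larger ratio---while $\Psi\le m\cdot n$ always holds. Hence at most $mn$ iterations occur, each polynomial-time, yielding the claimed polynomial-time algorithm.
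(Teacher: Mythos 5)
Your proposal is correct and, for the decision part, follows essentially the same route as the paper: both reduce the problem, via Theorem~\ref{thm:main_po}, to detecting a Type~I or Type~II cycle by enumerating a seed (an SL edge, or a pair with enough shared-small items and $r_u<r_w$) and running reachability queries in the LL-subgraph of the large-item transfer possibility graph $I(\mathbf{X})$. One small indexing slip: in the Type~I SS case the chain is $a_1\xrightarrow{\text{SL}}a_2\xrightarrow{\text{LL}}\cdots\xrightarrow{\text{LL}}a_k\xrightarrow{\text{SS}}a_1$, so the LL-reachability test should start at the SL-successor $a_2$ of the seed agent $a_1$, while the returned item must be small for $a_k$ and for $a_1$ rather than for the head of the LL-path; your description conflates the path's origin with the agent to be repaid, but the fix is mechanical (enumerate the SL edge as the paper does, then run reachability from its head). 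Where you genuinely go beyond the paper is the second claim: the paper's written proof of Corollary~\ref{cor:main} only establishes the decision procedure and is silent on why iterating improvements terminates after polynomially many steps. Your observation that each exchange may raise social welfare by as little as $1$ while the welfare range is of order $m\cdot\max_i r_i$ (exponential under binary encoding) correctly identifies why the naive bound fails, and the rank-weighted potential $\Psi$ does increase by at least $1$ per canonical exchange (a Type~I cycle creates a new large-for-holder slot for the agent shipping a small item; a Type~II cycle moves one such slot from a lower-rank agent to a strictly higher-rank one) while being bounded by $mn$, so the iteration count is polynomial. This is a real addition that the second sentence of the corollary needs and that the paper's proof does not supply.
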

\begin{proof}
    For a given allocation $\mathbf{X}$, construct a large-item transfer possibility graph $I(\mathbf{X}) = (V, E)$ according to Definition~\ref{def:LTP-graph}. 
    According to Theorem~\ref{thm:main_po}, $\mathbf{X}$ is Pareto-optimal if and only if $I(\mathbf{X})$ does not contain subgraphs $H_1$ or $H_2$. 

    To check the existence of \emph{Type I} improvements, consider each agent pair $(a_1, a_2)$ where $a_2$ views some item in $X_1$ as large while $a_1$ views it as small.
    Temporarily add an ``SL'' edge $(a_1, a_2)$ to $G$, and apply the $BFS$ method from $a_2$ to determine whether there exists a path of one or more consecutive $LL$ edges from $a_2$ leading to some agent, followed by a single $LL$, $SL$, or $SS$ edge returning to $a_1$.
    Remove the temporary edge after each search.
    If such a path exists, a Type I Pareto improvement is found; otherwise, none exists.

    Next, to check \emph{Type II} improvements, for each agent pair $(a_k, a_1)$ such that $X_k$ contains more than $r_1$ small items and $r_k > r_1$, temporarily add $r_1$ ``SS'' edges from $a_k$ to $a_1$ in $G$. 
    Then apply $BFS$ from $a_1$ to check whether a path of one or more consecutive $LL$ edges returns to $a_k$. 
    Remove the temporary edges after each search. 
    If such a path exists, a \emph{Type II} Pareto improvement is found; otherwise, none exists. 

    Since both constructing $G$ and checking the two types of Pareto improvements can be done in polynomial time, the entire verification process runs in polynomial time. 
\end{proof}

\subsection{Proof of Theorem~\ref{thm:main_po}}
\label{sec:PO_integer_proof}
We prove Theorem~\ref{thm:main_po} in this subsection.
Throughout this subsection, we fix an arbitrary Pareto-dominated allocation $\mathbf{X}$ and denote $H_{\min} = H_{\min}(\mathbf{X})$ for simplicity.

The proof proceeds by analyzing two cases, depending on whether $H_{\min}$ contains an SL edge:

\begin{enumerate}
    \item If $H_{\min}$ contains at least one SL (Small-to-Large) edge, we show that a Type I structure in Theorem~\ref{thm:main_po} must exist.
    \item If $H_{\min}$ contains no SL edges, we show that a Type II in Theorem~\ref{thm:main_po} structure must exist.
\end{enumerate}
In either case, the existence of the corresponding subgraph $H_1$ or $H_2$ necessarily implies that the entire graph $H_{\min}$ must be exactly $H_1$ or $H_2$, due to the minimality of $H_{\min}$.
Therefore, it suffices to prove 1 and 2 above, and we prove them in the next two subsections.

\subsubsection{$H_{\min}$ Contains an SL Edge} 
Suppose $H_{\min}$ contains an SL edge from agent $a$ to agent $b$. 
By Proposition~\ref{prop:no-source-no-sink}, $H_{\min}$ has neither source nor sink nodes. 
Therefore, starting from the edge $a \xrightarrow{\text{SL}} b$, there must exist a directed cycle in $H_{\min}$ that eventually returns to $a$. 
Moreover, since LS edges are forbidden by Proposition~\ref{prop:no-LS-edge}, all edges in this cycle must be of types LL, SL, or SS.

Let the cycle be $a\xrightarrow{SL} b\to c_1\to c_2\to\cdots\to c_k\to a$.
If all edges except for the first edge $(a,b)$ are of type LL, then the cycle is of Type I ($H_1$), as claimed. 
Now suppose there exists at least one non-LL edge along the path $b \rightarrow c_1 \rightarrow \cdots \rightarrow c_k \rightarrow a$. 
If the only non-LL edge is the final edge $(c_k, a)$, then the cycle still conforms to the $H_1$-type structure.

Otherwise, let $(c_i,c_{i+1})$ be the first non-LL edge in the path.
Since LS edges are not allowed, this edge must be either SL or SS.
Consider now the shortened cycle:
\[
a \xrightarrow{\text{SL}} b \rightarrow c_1 \rightarrow \cdots \rightarrow c_i \rightarrow a,
\]
where, for the item transfer corresponding to the last edge $(c_i,a)$, instead of letting agent $c_i$ pass an item to $c_{i+1}$, we let $c_i$ pass this item directly to $a$.

It is straightforward to check that this shorter cycle also gives a Pareto-improvement of Type I.
\begin{itemize}
\item Agent $c_i$ gives away an item she values as small and receives one she values as large, gaining utility.
    \item Agent $a$ gives away a small-valued item and receives an item that is either small or large, so her utility does not decrease.
    \item All other agents on the cycle continue to give and receive large items, leaving their utilities unchanged.
\end{itemize}

\subsubsection{$H_{\min}$ Contains No SL Edge}

We now analyze the case when $H_{\min}$ contains no SL edge. 
Recall from Proposition~\ref{prop:no-LS-edge} that $H_{\min}$ contains no LS edges either. Thus, each edge in $H_{\min}$ represents either an LL edge or an SS edge. 
Accordingly, throughout this subsection, when we describe an item in $H_{\min}$ as ``large'' or ``small'', we mean it is viewed as such by both agents involved in the transfer.

We classify the agents based on their role in $H_{\min}$ as follows:
\begin{itemize}
    \item A \emph{net large-item giver} has strictly more outgoing LL edges than incoming LL edges.
    \item A \emph{net large-item receiver} has strictly more incoming LL edges than outgoing LL edges.
\end{itemize}

By Proposition~\ref{prop:no-SS-path}, $H_{\min}$ contains no two consecutive SS edges, so no agent simultaneously gives and receives small items. 
Moreover, since $H_{\min}$ represents a Pareto-improvement, any agent giving away small items must be compensated by large items, and she must be a net large-item receiver. 
Conversely, due to minimality, an agent receiving small items must be a net large-item giver. 
Additionally, each agent who neither gives nor receives small items must give away as many large items as she receives (i.e., she is neither a net large-item giver nor a net large-item receiver); otherwise, the minimality of $H_{\min}$ or the Pareto-improvement property would be violated.

Thus, agents in $H_{\min}$ can be categorized exhaustively into four types:
\begin{enumerate}
    \item Agents who receive small items and are \emph{net large-item givers}.
    \item Agents who give small items and are \emph{net large-item receivers}.
    \item Agents who neither give nor receive small items and give away as many large items as they receive (``intermediate agents'').
    \item Agents who do not participate in exchanges.
\end{enumerate}

Notice that agents of the third and fourth types are inconsequential in our subsequent analysis, since agents of the third type merely serve as intermediaries.

Focusing on the ``net effect'' of item exchanges, we derive a natural bipartite structure defined as follows.

\begin{definition}[Item-Exchange Bipartite Graph]
Given a Pareto-dominated allocation $\mathbf{X}$ and its corresponding minimum Pareto-improvement graph $H_{\min}(\mathbf{X})$, the \emph{item-exchange bipartite graph} $B_{\min}(\mathbf{X}) = (L \cup S, E)$ is a multi-directed bipartite graph where:
\begin{itemize}
    \item $L = \{l_1, \ldots, l_p\}$ contains agents who are net large-item givers.
    \item $S = \{s_1, \ldots, s_q\}$ contains agents who are net large-item receivers.
    \item Edges in $E$ represent aggregated item transfers:
    \begin{itemize}
        \item For each small item transferred from $s_j \in S$ to $l_i \in L$ in $H_{\min}$, we add one directed edge $(s_j, l_i)$ in $E$.
        \item For large-item transfers, we create a working copy of $H_{\min}$ and iteratively identify paths of consecutive LL edges that begin at some $l_i \in L$ and end at some $s_j \in S$. For each such path, we add a directed edge $(l_i, s_j)$ to $E$ and remove the corresponding LL edges from the working copy. This process continues until no LL edges remain.
    \end{itemize}
\end{itemize}
\end{definition}

For simplicity, we denote this bipartite graph by $B_{\min}$.

We next present a key lemma.

\begin{lemma}\label{lemma:Bmin-to-category2}
    If there exist nodes $l_i \in L$, $s_j \in S$ in $B_{\min}$ with an edge $(l_i, s_j)$ satisfying:
    \begin{enumerate}
        \item $r_{l_i} < r_{s_j}$, and
        \item the number of small items given by $s_j$ to nodes in $L$ is at least $r_{l_i}$,
    \end{enumerate}
    then $H_{\min}$ contains a Type II ($H_2$) subgraph.
\end{lemma}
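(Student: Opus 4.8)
The plan is to construct the Type II structure $H_2$ explicitly from the hypothesized edge $(l_i, s_j)$ in $B_{\min}$. By the definition of the item-exchange bipartite graph, the edge $(l_i, s_j) \in E$ corresponds to a path of one or more consecutive LL edges in $H_{\min}$ that begins at $l_i$ and terminates at $s_j$. This gives us the ``large-item'' half of the desired cycle, namely $l_i \xrightarrow{\text{LL}} \cdots \xrightarrow{\text{LL}} s_j$. To close the cycle, I would use hypothesis (2): since $s_j$ gives away at least $r_{l_i}$ small items to nodes in $L$, I would select exactly $r_{l_i}$ of these SS edges and redirect them so that all $r_{l_i}$ small items flow from $s_j$ directly to $l_i$. (Here $r_{l_i}$ is an integer by the standing assumption of Theorem~\ref{thm:main_po}, so selecting exactly $r_{l_i}$ edges makes sense.) Concatenating these with the LL path yields the claimed subgraph
\[
H_2: \quad s_j \xrightarrow{r_{l_i} \times \text{SS}} l_i \xrightarrow{\text{LL}} \cdots \xrightarrow{\text{LL}} s_j.
\]

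Next I would verify that this $H_2$ is itself a valid Pareto-improvement over $\mathbf{X}$, confirming the structure is legitimate and not merely a syntactic pattern. Tracing the utilities around the cycle: agent $s_j$ gives $r_{l_i}$ small items (losing $r_{l_i}$ units of value under the normalized profile) and receives one large item along the incoming LL edge that closes the cycle (gaining $r_{s_j}$ units); since $r_{s_j} > r_{l_i}$ by hypothesis (1), agent $s_j$ strictly gains. Agent $l_i$ receives $r_{l_i}$ small items (gaining $r_{l_i}$) and gives away one large item to start the LL path (losing $r_{l_i}$, since an LL edge means the item is large for $l_i$ and it contributes $r_{l_i}$ to her utility), so her net change is exactly zero. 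Every intermediate agent on the LL path both receives and passes along a single large item, leaving their utilities unchanged. All agents outside the cycle are untouched. Hence $H_2$ is a genuine Pareto-improvement.

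Finally, I would invoke the minimality of $H_{\min}$ to pin down the structure exactly. The argument in the opening of Sect.~\ref{sec:PO_integer_proof} already records that once a valid $H_1$ or $H_2$ sub-improvement is exhibited inside $H_{\min}$, minimality forces $H_{\min}$ to coincide with it; so exhibiting the $H_2$ cycle above suffices to establish the conclusion of the lemma. I would state that the subgraph $H_2$ is the one guaranteed by the lemma and defer the global ``$H_{\min} = H_2$'' conclusion to the surrounding case analysis, as that bookkeeping is handled at the level of Theorem~\ref{thm:main_po} rather than this lemma.

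The step I expect to be the main obstacle is the careful accounting in hypothesis (2) and the integrality of $r_{l_i}$: I must ensure that the $r_{l_i}$ redirected SS edges genuinely originate at $s_j$ and can all be rerouted to $l_i$ without disturbing the LL path used for the return edge, and that after extracting the LL path the edge multiplicities in $B_{\min}$ still support exactly $r_{l_i}$ small-item edges out of $s_j$. In particular I would need to confirm that the small items counted in hypothesis (2) are disjoint from the items consumed by the LL path (they are, since SS and LL edges are different edges of $H_{\min}$), so that the two halves of the cycle can be assembled simultaneously into a single coherent item-exchange graph.
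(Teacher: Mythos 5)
Your proposal is correct and follows essentially the same route as the paper's proof: extract the LL path corresponding to the edge $(l_i,s_j)$ in $B_{\min}$, redirect $r_{l_i}$ of the small items that $s_j$ gives away so that they all go directly to $l_i$, and assemble the resulting cycle as the $H_2$ structure, deferring the ``$H_{\min}=H_2$'' conclusion to the minimality argument at the level of Theorem~\ref{thm:main_po}. Your additional utility accounting around the cycle is a useful (and correct) elaboration of what the paper only states informally after the theorem statement.
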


\begin{proof}
    By the definition of $B_{\min}$, an edge $(l_i, s_j)$ indicates a path of consecutive LL edges in $H_{\min}$ from agent $l_i$ to $s_j$. Thus, we immediately have an $H_2$ structure:
    \[
    H_2: s_j \xrightarrow{r_{l_i} \times \text{SS}} l_i \xrightarrow{\text{LL}} \cdots \xrightarrow{\text{LL}} s_j.
    \]
    Notice that instead of giving those $r_{l_i}$ items to possibly different agents in $L$, we just let agent $s_j$ directly give these items to $l_i$.
    This is precisely a Type II subgraph as defined in Theorem~\ref{thm:main_po}.
\end{proof}

We now show that an eligible $(l_i, s_j)$ pair described in Lemma~\ref{lemma:Bmin-to-category2} must exist in $B_{\min}$.

Let the net number of large items given by agents in $L$ be $\{A_1, \ldots, A_p\}$, and the number of small items they receive be $\{a_1, \ldots, a_p\}$. Similarly, let the net number of large items received by agents in $S$ be $\{C_1, \ldots, C_q\}$, and the number of small items they give be $\{c_1, \ldots, c_q\}$.

By the Pareto-improvement property, each agent’s utility must weakly increase. For agents in $L$, this implies
\[
\sum_{i=1}^{p} A_i \cdot r_{l_i} \;\leq\; \sum_{i=1}^{p} a_i \;=\; \sum_{j=1}^{q} c_j.
\]

We define the \emph{surplus function} 
\[
\mathsf{surplus}: S \rightarrow \mathbb{R}
\]
measuring the excess value of small items given by agent \( s_j \), after accounting for the total incoming value (according to the ratios of senders) of large items she receives.
That is, for each agent \( s_j \in S \),
\[
\mathsf{surplus}(s_j) = c_j - \sum_{\substack{l_i \in L \\ (l_i, s_j) \in E}} w_{ij} \cdot r_{l_i},
\]
where \( w_{ij} \in \mathbb{N} \) denotes the number of large-item edges from agent \( l_i \) to agent \( s_j \) in $B_{\min}$.

Summing over $\mathsf{surplus}(s_j)$ for all \( s_j \in S \), we obtain:
\[
\sum_{j=1}^q \mathsf{surplus}(s_j) = \sum_{j=1}^q c_j - \sum_{i=1}^p A_i \cdot r_{l_i} \;\geq\; 0.
\]

Therefore, there exists at least one agent \( s_j \in S \) such that \( \mathsf{surplus}(s_j) \geq 0 \).
Recalling that $s_j$ is a net large-item receiver (so the summation in the second term of $\mathsf{surplus}(s_j)$ contains at least one term), this implies that there exists an edge $(l_i,s_j)$ in $B_{\min}$ such that the number of small items given away by agent $s_j$ is at least $r_{l_i}$, i.e., $c_j\geq r_{l_i}$.
If we further have $r_{l_i}<r_{s_j}$, $H_{\min}$ contains $H_2$ by Lemma~\ref{lemma:Bmin-to-category2}.
If $r_{l_i}\geq r_{s_j}$, we derive a contradiction by showing that the corresponding $H_{\min}(\mathbf{X})$ cannot be minimum.
Indeed, we can remove the path corresponding to $(l_i,s_j)$ in $H_{\min}(\mathbf{X})$; agent $l_i$ receives $r_{l_i}$ less small items than before (notice that agent $l_i$ receives at least $r_{l_i}$ small items before, in order to ensure Pareto-optimality); agent $s_j$ gives away $c_i-r_{l_i}$ small items (instead of $c_i$ items as before).
Since both the total number of small items given away by the agents in $S$ and the total number of small items taken by agents in $L$ are reduced by the same number $r_{l_i}$, this ensures the utilities for all agents but $s_j$ remain unchanged, and the utility of agent $s_j$ increases.
In addition, fewer edges are introduced in this new Pareto-improvement.
Thus, we have a contradiction.

This completes the proof of Theorem~\ref{thm:main_po}.

\subsection{General Case Where $r_i$ May Be Fractional}
\label{sec:coNP-complete}
We next investigate the complexity of determining Pareto-optimality in general personalized bi-valued instances with fractional ratios. 
We show that the problem is coNP-complete.

\begin{theorem}\label{thm:coNP-completeness}
    Given an allocation $\mathbf{X}$ with utility profiles $\{(r_i,1)\}_{i \in [n]}$, deciding the Pareto-optimality of $\mathbf{X}$ is $\text{coNP-complete}$ even when $r_i \in \{r_a, r_b\}$. 
\end{theorem}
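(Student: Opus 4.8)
The plan is to prove $\text{coNP}$-completeness in two parts: membership and hardness. Membership in $\text{coNP}$ is immediate from the discussion in the introduction---if $\mathbf{X}$ is not Pareto-optimal, then any dominating allocation $\mathbf{X}'$ serves as a polynomial-size certificate, since verifying $u_i(X_i')\geq u_i(X_i)$ for all $i$ and strict improvement for some $j$ takes polynomial time. The substance of the theorem is the $\text{coNP}$-hardness, which I would establish by reducing from a known $\text{NP}$-complete problem to the complement (i.e., reduce from an $\text{NP}$-complete problem to the problem ``$\mathbf{X}$ is \emph{not} Pareto-optimal'').

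The natural source problem is a variant of \textsc{Subset-Sum} or \textsc{Partition}, because the restriction $r_i\in\{r_a,r_b\}$ and the fractional ratios strongly suggest that detecting a Pareto-improvement amounts to finding an integer combination of the two ratios that balances. Concretely, I would design a gadget where each of the two ratio classes corresponds to one ``side'' of a numerical balancing condition. The idea is that a Pareto-improvement in this two-ratio instance exists precisely when one can select a collection of small-for-both items moving one way and large-for-both items moving the other way so that the net utility change is nonnegative for every agent and strictly positive for at least one. Since the two types of Pareto-improvements characterized for integer ratios (Theorem~\ref{thm:main_po}) no longer exhaust all improvements when ratios are fractional, the new improvements encode an arithmetic condition: an agent with ratio $r_a$ gives up several small items to receive a large item from (a chain of) agents, and the feasibility of compensating everyone along the cycle reduces to whether a subset of numbers hits a prescribed target. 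I would set up the items, the two values $r_a$ and $r_b$, and the initial allocation $\mathbf{X}$ so that a ``yes'' instance of \textsc{Subset-Sum} corresponds exactly to the existence of a length-balanced exchange cycle that Pareto-dominates $\mathbf{X}$.

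The key steps, in order, would be: (1) fix an $\text{NP}$-complete problem such as \textsc{Subset-Sum} with target $T$ and items with sizes $\{s_1,\dots,s_k\}$; (2) choose $r_a$ and $r_b$ as rationals whose interplay encodes $T$ (for instance choosing denominators so that a valid exchange forces the sum of selected small-item transfers to equal $T$); (3) construct agents and goods so that each \textsc{Subset-Sum} element is represented by an item whose movement contributes its size to the net balance of some designated agent; (4) define the starting allocation $\mathbf{X}$ and verify that $\mathbf{X}$ is Pareto-dominated if and only if the \textsc{Subset-Sum} instance is satisfiable; and (5) confirm the reduction runs in polynomial time and that all constructed ratios genuinely lie in $\{r_a,r_b\}$. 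The correctness argument splits into the forward direction (a subset summing to $T$ yields an explicit Pareto-improvement) and the converse (any Pareto-improvement, analyzed via its item-exchange graph, must localize to a cycle whose balance equation recovers a valid subset).

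The main obstacle I anticipate is the converse direction: showing that \emph{every} Pareto-improvement of $\mathbf{X}$---not merely the ``intended'' cyclic one---forces a solution to the \textsc{Subset-Sum} instance. Because fractional ratios permit richer improvements than the integer case, I would need to argue that any minimal Pareto-improving exchange (analyzed through the propositions already established, such as the absence of LS edges and of consecutive SS edges) decomposes into cycles, and that the aggregate utility-balance constraints across these cycles, together with the specific numeric choices of $r_a,r_b$, collapse to exactly the target-sum equation. Controlling this---ensuring no ``spurious'' improvement exists that solves the gadget without corresponding to a genuine subset---is the delicate part, and it is where the precise arithmetic encoding of $r_a,r_b$ relative to the \textsc{Subset-Sum} sizes must be chosen with care so that the only feasible balances are the intended ones.
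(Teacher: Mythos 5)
Your overall framing is right---membership in coNP is immediate, and the substance is a reduction from an NP-complete problem to non-Pareto-optimality---but the proposal has a genuine gap, and the choice of \textsc{Subset-Sum} as the source problem is itself the problem. In a personalized bi-valued instance with $r_i \in \{r_a, r_b\}$, every item is worth either $r_i$ or $1$ to each agent, so the ``contribution'' of moving an item to any balance equation is bounded by $r_a$. There is no way to encode an arbitrary integer size $s_j$ into a single item or a single ratio: to make a transfer contribute $s_j$ to a net-utility balance you would need on the order of $s_j$ items, which makes the reduction pseudo-polynomial. Since \textsc{Subset-Sum} is only weakly NP-complete (solvable in pseudo-polynomial time), such a reduction would not establish coNP-hardness. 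This is precisely why the paper reduces instead from a \emph{strongly} NP-complete structural problem, the $k$-Regular Bipartite Subgraph Problem of Plesn\'{\i}k: the combinatorial difficulty is encoded in \emph{which} items are large for \emph{which} agents (so that the large-item transfer possibility graph $I(\mathbf{X})$ reproduces the input bipartite graph), not in the magnitudes of the values.

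The second, related gap is the converse direction, which you correctly flag as the delicate part but do not resolve. The paper's mechanism for controlling all possible Pareto-improvements is a quantitative one you would also need some version of: the ratios are chosen as $r_a = 1 + 1/k$ and $r_b = 1 + (1+\epsilon)/k$ with $\epsilon < 1/(km)$, so the total social welfare can increase by at most $|E|\cdot\epsilon < 1/k$ under any Pareto-improvement. This forces every participating agent in $A$ to give away \emph{exactly} $k$ large items and receive exactly $k+1$ small ones (any other combination either decreases her utility or increases it by at least $1/k$, contradicting the welfare bound), and symmetrically for agents in $B$; the LL edges of the improvement then form exactly a $k$-regular subgraph. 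Without an analogous rigidity argument pinning down the only feasible exchange pattern, your step (4) cannot be completed, and the ``spurious improvements'' you worry about are a real threat. In short: the architecture of your plan is sound, but the source problem must be strongly NP-complete and encoded structurally rather than numerically, and the converse requires an explicit welfare-slack argument of the kind above.
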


To give some intuitions, if the ratios of the two agents are close, then we may need ``many-many exchange'' instead of the ``one-many exchange'' described in Type II.
For example, for two agents $i$ and $j$ with ratios $r_i=\frac43$ and $r_j=\frac43+\varepsilon$, a Pareto-improvement may be that agent $i$ gives $3$ large items to agent $j$, in exchange of $4$ small items from agent $j$.
This type of Pareto-improvements can no longer be easily checked by finding cycles in a directed graph.
In addition, more complicated scenarios can happen where a set of $T$ agents with ratios $\frac43$ give away a total number of $3T$ items to a set of $T$ agents with ratios slightly larger than $\frac43$, in exchange for a total of $4T$ small items.
This requires that the large-item transfer possibility graph $I(\mathbf{X})$ contains a 3-regular $T\times T$ bipartite subgraph, and deciding the existence of a regular bipartite subgraph from a bipartite graph is known to be NP-complete~\citep{PLESNIK1984161}.

\paragraph{Proof of Theorem~\ref{thm:coNP-completeness}.}
We present a reduction from the following NP-complete problem established by~\citet{PLESNIK1984161}.

\begin{definition}[$k$-Regular Bipartite Subgraph Problem]
Let $k \geq 3$ be a fixed integer.  
Given a bipartite graph $G = (A \cup B, E)$ in which each vertex has degree at most $k+1$, decide whether $G$ contains a $k$-regular bipartite subgraph.
\end{definition}

Given an instance $G = (A \cup B, E)$ of the $k$-Regular Bipartite Subgraph Problem, we construct a corresponding instance of the allocation problem under personalized bi-valued utilities.

\begin{instance}
\label{ins:kregular}
    We construct an instance with $n = |A \cup B|$ agents and $m = |E| + (k+1)|B|$ items.
    We begin by constructing one agent for each vertex in $A\cup B$.
    For each edge $(a,b)\in E$ with $a\in A$ and $b\in B$, we place an item in agent $a$'s bundle. 
    This item is considered a small item by all agents except agents a and b, for whom it is a large item. 
    Additionally, for each agent in $B$, we construct $k+1$ items that are small for all agents, and assign them to this agent's bundle.
    This completes the construction of the allocation. 
    Observe that the bundle of each agent \( a \in A \) contains at most \( k+1 \) large items, as the degree of any vertex in \( G \) is at most \( k+1 \). 
    These items may be perceived as either large or small by agents in \( B \), depending on the graph structure. 
    On the other hand, each agent \( b \in B \) receives a bundle consisting precisely of \( k+1 \) small items.

    The utility profile is defined as follows: for each agent \( a \in A \), the utility values are drawn from \( \{(r_a = 1 + 1/k, 1)\} \); and for each agent \( b \in B \), the values are from \( \{(r_b = 1 + \frac{1+\epsilon}{k}, 1)\} \), where \( \epsilon \) is a sufficiently small positive real number satisfying \( \epsilon < 1/(km) \).
\end{instance}

It is easy to see that the large-item transfer possibility graph $I(\mathbf{X})$ (Definition~\ref{def:LTP-graph}) corresponding to the allocation $\mathbf{X}$ in Instance~\ref{ins:kregular} coincides exactly with the graph $G$, with the directed edges from $A$ to $B$ in $I(\mathbf{X})$ viewed as being undirected.

The following lemma concludes Theorem~\ref{thm:coNP-completeness}.

\begin{lemma}
\label{lemma:generalpo}
Let $\mathbf{X}$ be the allocation described in Instance~\ref{ins:kregular}.
$G$ contains a $k$-regular subgraph if and only if $\mathbf{X}$ is Pareto-dominated. 
\end{lemma}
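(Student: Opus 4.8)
The plan is to prove both directions of the equivalence, connecting the combinatorial structure of a $k$-regular subgraph in $G$ to a Pareto-improvement of $\mathbf{X}$. The key observation, already noted in the excerpt, is that the large-item transfer possibility graph $I(\mathbf{X})$ coincides with $G$: each edge-item held by some $a \in A$ is large for exactly the two endpoints $a,b$ of the corresponding edge, so agent $a$ can transfer it as a large item only to the matching $b \in B$. The ratios are chosen so that an agent $a \in A$ (ratio $1+1/k$) is willing to trade $k$ of her large items for $k+1$ small items with no utility loss, since $k \cdot (1+1/k) = k+1$; meanwhile an agent $b \in B$ (ratio $1 + (1+\epsilon)/k$) who receives $k$ large items and gives away her $k+1$ small items strictly gains, because $k\cdot(1+(1+\epsilon)/k) = k+1+\epsilon > k+1$. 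The tiny $\epsilon < 1/(km)$ is there precisely to make every $B$-agent's gain strict while keeping the numbers integral enough that no ``many-to-many'' cross-cancellation can manufacture an improvement from a non-regular structure.

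For the forward direction (if $G$ has a $k$-regular subgraph $G'$, then $\mathbf{X}$ is Pareto-dominated), I would use $G'$ to define the exchange directly. Let $A', B'$ be the vertex sets of $G'$. For each agent $a \in A'$, the $k$ edges of $G'$ incident to $a$ correspond to $k$ large items in her bundle; she sends each such item along the edge to the corresponding $b \in B'$, for whom it is also large, so these are $k$ LL transfers. In return, each $b \in B'$ distributes her $k+1$ small items to the $A'$-agents adjacent to her in $G'$. One must balance the count: the total number of large items flowing into $B'$ is $k|B'|$ (since $G'$ is $k$-regular on the $B'$ side) and equals $k|A'|$, while the total number of small items flowing out of $B'$ is $(k+1)|B'|$, which must be distributed so that each $a \in A'$ receives exactly $k+1$ small items to compensate for the $k$ large items she sent. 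Since $G'$ is $k$-regular, $k|A'| = k|B'|$ forces $|A'|=|B'|$, so the small-item count $(k+1)|B'| = (k+1)|A'|$ matches exactly. I would verify that each $a$'s utility is unchanged (loses $k(1+1/k)=k+1$, gains $k+1$) and each $b$'s utility strictly increases (loses $k+1$ small, gains value $k+1+\epsilon$), giving a genuine Pareto-improvement.

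For the reverse direction (if $\mathbf{X}$ is Pareto-dominated, then $G$ has a $k$-regular subgraph), I would start from an arbitrary Pareto-improvement and extract a regular subgraph from the induced exchange. Since all items are either large-for-two-agents or globally small, and $I(\mathbf{X}) = G$, any large-item transfer must follow an edge of $G$. I would argue that in any Pareto-improving exchange, every agent's net utility is nonnegative, and summing utilities shows the total value is conserved up to the strict gains; the choice $\epsilon < 1/(km)$ guarantees that the $\epsilon$-contributions are too small to be offset by any integral rearrangement, which forces each participating $a \in A$ to give away exactly a multiple of $k$ large items balanced against $k+1$ small items per block, and each participating $b \in B$ to receive exactly $k$ large items and shed exactly $k+1$ small items. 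Restricting $G$ to the agents who participate and the edges along which large items actually move then yields a subgraph in which every participating vertex has degree exactly $k$, i.e., a $k$-regular subgraph.

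The main obstacle will be the reverse direction: ruling out ``spurious'' Pareto-improvements that do not arise from a clean $k$-regular structure — for instance, exchanges where an $A$-agent gives away fewer than $k$ large items, or where small-item flows cascade across many agents. The role of $\epsilon$ being smaller than $1/(km)$ is the crux of this argument: I would formalize that any Pareto-improvement's aggregate surplus is a sum of integer multiples of $1/k$ (from ratio-$(1+1/k)$ agents) plus multiples of $(1+\epsilon)/k$ (from the $B$-agents), and show that for the total to be achievable with every agent weakly better off and at least one strictly, the integer parts must cancel exactly, pinning down the per-agent trade counts to the $k$-for-$(k+1)$ pattern. I expect this counting/infeasibility argument — showing the $\epsilon$ perturbation cannot be ``amortized'' across a large non-regular exchange — to require the most care, and it is essentially the place where Plesník's NP-completeness of regular bipartite subgraph detection gets inherited by our problem.
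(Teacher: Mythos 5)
Your proposal is correct and follows essentially the same route as the paper: the forward direction is the same direct $k$-for-$(k{+}1)$ exchange along the regular subgraph, and your reverse direction is the paper's argument that $\epsilon<1/(km)$ caps the total welfare gain below $1/k$ so that each agent's gain, being a multiple of $1/k$ (plus negligible $\epsilon$-terms), must vanish for $A$-agents and pin the trade counts to exactly $k$ large for $k{+}1$ small. The only detail to make explicit when you write it out is that ``a multiple of $k$ large items'' collapses to exactly $k$ only because each $A$-agent holds at most $k+1$ large items (the degree bound) and $k\ge 3$.
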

\begin{proof}
    Assume the vertices in the $k$-regular subgraph is $A_1 \cup B_1$ with $A_1 \subseteq A$ and $B_1\subseteq B$. 
    By regularity, we have $|A_1| = |B_1|$. 
    
    [Sufficiency]
    This follows immediately: each agent in $A_1$ gives up $k$ large items (following the edges in the $k$-regular subgraph) and receives $k+1$ small items, while each agent in $B$ exchanges $k+1$ small items for $k$ large ones. Under this arrangement, each $A_1$-agent retains the same utility, and every $B_1$-agent's utility is increased by $\epsilon$.
    
    [Necessity]
    We prove that, if the allocation $\mathbf{X}$ is Pareto-dominated, all ``LL'' edges in the corresponding $H_{\min}(\mathbf{X})$ must form a directed $k$-regular subgraph which corresponds to an (undirected) $k$-regular subgraph of $G$. 

    Let $U(\mathbf{X}) = \sum_{i \in A \cup B}{u_i(X_i)}$ be the social welfare of $\mathbf{X}$. 
    In Instance \ref{ins:kregular}, $U(\mathbf{X}) = |E|\cdot r_a + |B|\cdot (k+1)$ with all large items allocated to agents in $A$. 
    Assume that $\mathbf{X'}$ is a Pareto-improvement of $\mathbf{X}$, it is easy to see that $U(\mathbf{X'}) \leq U_{max} = |E|\cdot r_b + |B|\cdot (k+1)$, where $U_{max}$ is the maximum possible social welfare where all large items are allocated to agents in $B$ with a larger large-to-small ratio. 
    Therefore, the maximum increment of social welfare of $\mathbf{X}$ is $\max(\Delta(U)) = U_{max} - U(\mathbf{X}) = |E|\cdot \epsilon<1/k$.
    This implies that
    \begin{center}
        (*) every agent's utility increment from $\mathbf{X}$ to $\mathbf{X}'$ is strictly less than $1/k$.
    \end{center}
    Moreover, as at least one of the agents' utility improves while others do not decrease, $U$ strictly increases during a Pareto-improvement. 
    Thus, $H_{\min}$ must include some LL edges from $A$ to $B$. 

    Let $a$ be an arbitrary agent in $A$ who gives out large items according to $H_{\min}$.
    Assume that $a$ gives out $x$ large items and receives $y$ small ones in return.
    %We will show that we must have $x=k$ and $y=k+1$.
    Notice that $a$'s utility is changed by $\Delta(u_a) = y - x(1+1/k) = y - x - x/k$. 
    In the following, we prove by contradiction that we must have $x=k$.
    \begin{itemize}
        \item $0 < x < k$\\
        Then $a$ should receive at least $x+1$ small items from agents in $B$ for her loss. 
        Therefore, with $y \geq x+1$, the utility gain for agent $a$ is $\Delta(u_a) =y - x - x/k \geq 1 - x/k\geq 1/k$, which contradicts to (*).
        \item $x>k$\\
        Since by our construction that each agent in $A$ receives at most $k+1$ items, we have $x=k+1$.
        To ensure that the utility of $a$ does not decrease, we need that $\Delta(u_a)=y-(k+1)-(k+1)/k\geq0$, which implies $y\geq k+3$.
        However, if this is the case, we have $\Delta(u_a)\geq (k+3)-(k+1)-(k+1)/k=1-1/k>1/k$, which again contradicts to (*).
    \end{itemize}
    Given $x=k$, we must also have $y=k+1$.
    Firstly, to make sure agent $a$'s utility does not decrease, we must first have $y\geq k+1$.
    If $y=k+1$, the utility of agent $a$ remains unchanged: $\Delta(u_a)=(k+1)-k-k/k=0$.
    Therefore, if $y>k+1$, agent $a$'s utility is increased by at least $1$, which contradicts (*).

    Therefore, each agent in $A$ either does not participate in the item exchange process of the Pareto-improvement or gives out $k$ large items and receives $k+1$ small ones. 

    With a similar analysis based on the bound $1/k$ on utility gains in (*), we can prove that all agents in $B$ either do not participate in the item exchange process of the Pareto-improvement or give out $k+1$ small items and receive $k$ large ones.

    Therefore, $H_{\min}(\mathbf{X})$ forms a $k$-regular subgraph, and it corresponds to a $k$-regular subgraph of $G$.
\end{proof}

\section{On EFX Allocations}
\label{sec:EFX}

We show that an EFX allocation always exists for personalized bi-valued utility instances, and it can be computed in polynomial time.
Our algorithm, named \emph{Match\&Modify\&Freeze}, builds upon the earlier ``match-and-freeze'' approach proposed by~\citet{amanatidis2021maximum} for simpler bi-valued valuations, where all agents share the same pair of valuation numbers. 
Our finding is concurrent with that of \citet{byrka2025probingefxpmmsnonexistence}, who independently reached the same conclusion through a similar modification.

\subsection{Technical Overview}
Our main technical contribution is the introduction of an additional \emph{modify} step specifically designed to handle the personalized nature of the agents' valuations. At a high level, the algorithm operates as follows:

\begin{enumerate}
    \item \textbf{Matching step:} In each round, the algorithm first finds a maximum matching between agents and available goods, considering edges that represent large-value items for each agent. 
    
    \item \textbf{Modify step (our contribution):} If the matching from the previous step is not perfect, meaning some agents are left unmatched, we carefully adjust it to prioritize agents with larger valuation ratios ($r_i = a_i / b_i$). This step ensures that agents who place a higher relative value on large items are allocated them preferentially, preserving fairness and avoiding unnecessary envy among agents. This careful adjustment distinguishes our method from previous approaches that consider uniform bi-valued utilities.
    
    \item \textbf{Freezing step:} After allocating items according to the matching, the algorithm ``freezes'' certain agents temporarily if they have received a large-valued good that might lead to envy from others. Frozen agents do not receive further goods for a few rounds, allowing the algorithm to manage potential envy by giving smaller-valued items to the remaining unfrozen agents.
\end{enumerate}

By repeatedly applying these steps, our method guarantees that the partial allocation after each round remains EFX—no agent strongly envies another agent even if an arbitrary item is removed from the other's bundle. When the algorithm terminates (when all goods are allocated), it produces an EFX allocation in polynomial time.

\subsection{Match\&Modify\&Freeze Method} 
We first provide a brief description of our method in Algorithm~\ref{alg:mf} and then define some necessary graph notions to describe the algorithm, and finally formally prove its correctness. 
Our method, named \emph{Match\&Modify\&Freeze}, follows the ideas of \citet{amanatidis2021maximum}, which discusses the more special setting with bi-valued valuations (where $a_i$ and $b_i$ are the same for all agents). 
Our method proceeds in \emph{rounds} (the loops in Line \ref{line:loop} of Algorithm \ref{alg:mf})  and maintains a set of \emph{unfrozen} agents. % TODO: modify the expressions?
We begin by constructing a bipartite graph \(G = (V_1, V_2, E)\), where \(V_1\) represents the \emph{unfrozen} agents (at first all \(n\) agents) and \(V_2\) represents the remaining goods (at first all \(m\) goods). 
An edge connects agent \(i\) to good \(g\) precisely when \(g\) is a \emph{large} good to $i$. 

In each round, we allocate exactly one good to every \emph{unfrozen} agent, assuming sufficient goods remain available.
The process begins by finding a maximum matching between the agents and the goods in \(G\).
If the matching is not perfect (i.e., not every agent in \(V_1\) is matched with a good in \(V_2\)), we adjust it to prioritize agents with larger ratios while preserving the size of the matching. 
These adjustments are performed using Algorithm~\ref{alg:adjustmatching}, which involves additional techniques compared to the method in \citet{amanatidis2021maximum}. 

Once the adjustment for the matching is done, the matched goods are allocated to the corresponding agents. 
If the matching is perfect, each agent receives one large good, and we proceed to the next round.  
Otherwise, we arbitrarily allocate a remaining good to each unmatched agent, and the current round ends. 
The allocated goods are removed from \(V_2\) at the end of each round. 

Agents with larger ratios who are prioritized to receive large goods in an imperfect matching may be envied by other agents who miss out, either in the current round or in subsequent ones. 
To address this, we use a \emph{freezing} mechanism in Algorithm \ref{alg:freeze} that removes these agents from \(V_1\) for several rounds, and they will be \emph{unfrozen} afterward with the possible envy eliminated. 
At the end of each round, the graph \(G\) updates by removing the \emph{frozen} agents and reintroducing them when they are \emph{unfrozen} in \(V_1\). 

This iterative process continues until either all goods are allocated or the remaining goods become fewer than the number of \emph{unfrozen} agents in the current round.
In the latter case, the remaining goods are allocated according to the agents’ freezing times, giving priority to those who have never been frozen or who were frozen more recently.

Crucially, we ensure that after each round, the partial allocation constructed so far remains EFX.

\begin{theorem}\label{thm:main}
    For any personalized bi-valued instance $(N, M, \{r_i\}_{i\in [n]})$, Algorithm~\ref{alg:mf} computes an EFX allocation in polynomial time.
\end{theorem}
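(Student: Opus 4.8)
The plan is to establish the theorem through a single loop invariant—that the partial allocation at the end of every round is EFX—and then to argue termination and the polynomial running-time bound separately. I would run the argument as an induction on the round index $t$. The base case is the empty allocation, which is vacuously EFX. For the inductive step I assume the partial allocation after round $t$ is EFX, and I carry along auxiliary bookkeeping invariants that the proof will need: that at every round boundary the bundle cardinalities of the currently unfrozen agents differ by at most one, and that every agent who has received a large good is kept frozen for precisely as long as that good could generate envy, so that potential enviers are given time to ``catch up'' in bundle value. These auxiliary invariants are exactly what the freezing schedule of Algorithm~\ref{alg:freeze} is designed to enforce, playing the same structural role as freezing in the Match\&Freeze algorithm of \citet{amanatidis2021maximum}.

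The inductive step splits according to whether the maximum matching found in round $t+1$ is perfect. If it is perfect, every unfrozen agent simultaneously receives exactly one good that is large for her; here I would argue that EFX is preserved because each agent's own bundle grows by her largest possible per-item increment while no competitor's bundle gains more than a single good, so every comparison $u_i(X_i) \geq u_i(X_j) - \min_{g \in X_j} u_i(g)$ is maintained. The genuinely delicate case is the imperfect matching, where the Modify step (Algorithm~\ref{alg:adjustmatching}) reassigns the matched large goods, without changing the matching size, so that among the agents competing for large goods the ones with larger ratios $r_i$ are served. Here I would verify two families of EFX comparisons. First, toward an agent $j$ who just received a large good: any potential envier $i$ satisfies $u_i(X_i) \geq u_i(X_j) - \min_{g \in X_j} u_i(g)$ because $j$ is frozen immediately while $i$ continues to accumulate goods, so the cardinality gap closes before $j$ is thawed. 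Second, toward an agent who received only a small good: these comparisons are the easiest, since a small-good recipient contributes little marginal value and the balanced-cardinality invariant already suffices.

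The step I expect to be the main obstacle is verifying EFX in the imperfect-matching case in the presence of heterogeneous ratios. In the uniform bi-valued setting every agent measures envy on the same scale, so a counting argument on the number of large goods per bundle closes the induction; here a low-ratio agent and a high-ratio agent evaluate the same pair of bundles differently, so I must show that the Modify step's prioritization of high-ratio agents is precisely the choice that forecloses an EFX violation. Concretely, the dangerous configuration is a high-ratio agent $i$ who is denied a large good and handed a small one, while a low-ratio agent $j$ receives the very good $i$ values as large; the Modify step is meant to exclude exactly this configuration, and I would rule it out by an exchange argument on the adjusted matching, showing that any such surviving $(i,j)$ pair would admit a size-preserving reassignment contradicting the output of Algorithm~\ref{alg:adjustmatching}.

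Finally, for the terminal distribution—when the remaining goods number fewer than the unfrozen agents and are handed out in order of freezing time—I would check that this last partial step cannot break EFX, using that the goods distributed there are small for their recipients and that the freezing-time ordering respects the balanced-cardinality invariant. Termination and the polynomial running time are then immediate: each round allocates at least one good to each unfrozen agent and hence at least one good overall, bounding the number of rounds by $O(m)$, while the per-round work—computing a maximum matching, running the adjustment of Algorithm~\ref{alg:adjustmatching}, and updating the freezing schedule—is polynomial by standard bipartite-matching techniques. Combining the invariant with the fact that the process halts only once all goods are allocated yields a final EFX allocation, proving the theorem.
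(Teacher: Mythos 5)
Your skeleton matches the paper's: a round-by-round induction with the invariant that every partial allocation is EFX, a split on whether the matching is perfect, freezing so that denied agents can catch up, and the Modify step to handle heterogeneous ratios. But there is a genuine gap: you only verify the EFX comparisons \emph{toward} the agents who receive large or small goods, and never the comparison \emph{from} a frozen agent toward everyone else. A frozen agent sits out several rounds while the others keep accumulating goods, so she can fall behind; the paper devotes one of its three cases to exactly this, and it is the case that dictates the whole design. The argument needed is quantitative: an agent matched along an alternating path rooted at an unmatched agent $i_0$ is frozen for $\lfloor r_{i_0}-1\rfloor$ rounds (the time $i_0$ needs to catch up), and during those rounds she forgoes goods each worth $1$ to her (since, once frozen, no remaining good is large for her --- a structural fact you also need and do not state, along with ``once an agent receives a small good she never again receives a large one'' and ``each agent is frozen at most once''). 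For her not to end up strongly envying, her one-time gain $r_i$ must cover the loss, i.e.\ one needs $\lfloor r_{i_0}-1\rfloor \le r_i - 1$, which holds precisely because the Modify step guarantees $r_i \ge r_{i_0}$ for matched agents on the path. So the Modify step is not primarily there to stop a low-ratio agent from ``taking'' a good a high-ratio agent values as large (that is unavoidable when the matching is imperfect); it is there to ensure the recipient's own ratio is large enough to absorb the freezing penalty calibrated to the denied agent. Your proposed exchange argument on the adjusted matching does point at the right combinatorial property, but you attach it to the wrong EFX comparison, so the crucial inequality never appears.

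A secondary problem: your auxiliary invariant that the bundle cardinalities of currently unfrozen agents differ by at most one is false. An agent frozen for $s>1$ rounds permanently lags the never-frozen agents by $s$ goods after she thaws, and both are then unfrozen. The correct bookkeeping is in utility, not cardinality: the large good she received compensates (from her own perspective) for the missed rounds, and (from the enviers' perspectives) the freezing duration closes the gap to within one small good. With that replacement and the frozen-agent-as-envier case added, your plan becomes the paper's proof.
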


\begin{proof}
    See Section~\ref{sec:proof_thm_main}. 
\end{proof}

\subsection{Preliminaries}

Before proving Theorem~\ref{thm:main}, we present some definitions and observations related to the \\ \emph{Match\&Modify\&Freeze} method. 

We first define some graph notations related to the allocation process. 

\begin{definition}[Matching]
\label{def:matching}
    Given a bipartite graph $G = (V_1, V_2, E)$, a \emph{matching} $T = (V_1', V_2', E')$ is a subgraph of $G$ with $V_1' \in V_1$, $V_2' \in V_2$ and $E' = \{(i, g): i \in V_1', g \in V_2'\}$. 
\end{definition}

\begin{definition}[Alternating Path]
\label{def:alternating_path}
    Given a bipartite graph $G = (V_1, V_2, E)$ and a matching $T$, an \emph{alternating path} $P$ is a path in $G$ starting from an unmatched node with the edges alternating between those not in the matching $T$ and those in the matching $T$. 
    
    Let $P = \{i_0, (g_1, i_1), \ldots, (g_p, i_p)\}$ denotes the alternating path, consisting of nodes $\{i_0, g_1, i_1, \ldots, i_p, g_p\}$ with $\{i_x\}_{x \in [p]_0} \subseteq V_1$ and $\{g_y\}_{y \in [p]} \subseteq V_2$. 
    The edges in $P$ are $\{(i_0, g_1), (g_1, i_1), (i_1, g_2), \ldots,(i_{p-1}, g_p), (g_p, i_p)\}$, with $(g_1, i_1), (g_2, i_2), \ldots, (g_p, i_p)$ in the matching $T$ and the others not. 
\end{definition}

\begin{definition}[Augmenting Path]
\label{def:augmenting_path}
    Given a bipartite graph $G = (V_1, V_2, E)$ and a matching $T$, an \emph{augmenting path} is a special \emph{alternating path} that starts and ends in unmatched nodes. 
\end{definition}

We then present some properties of Algorithm \ref{alg:freeze}, which determines which agents to freeze.

\begin{proposition}\label{prop:frozen_nolarge}
    If an agent is frozen in some round, then she considers none of the remaining unallocated goods as large ones.
\end{proposition}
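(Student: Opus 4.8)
The plan is to understand precisely what the freezing mechanism does and then argue that a frozen agent has already received a large good, and that from the moment she is frozen no further good she would value as large remains unallocated.

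First I would recall the structure of a round. In each round the algorithm finds a maximum matching in $G=(V_1,V_2,E)$ where edges connect an unfrozen agent to a good she considers large. If the matching is \emph{perfect} on the unfrozen agents, everyone receives a large good and we move on with no freezing — so freezing can only be triggered when the matching is \emph{imperfect}. The freezing mechanism (Algorithm~\ref{alg:freeze}) is invoked exactly in this imperfect case, after the Modify step has reorganized the matching to favor high-ratio agents. I would therefore isolate the defining property that makes an agent a freezing candidate: an agent $i$ is frozen precisely because she is \emph{matched} (she received a large good in this round) while some other unfrozen agent is left \emph{unmatched}, and freezing her prevents that unmatched agent (and later agents) from envying $i$ by more than EFX permits.

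The key claim to establish is the following. Since the matching is a \emph{maximum} matching and agent $i$ is matched but some agent is unmatched, I would invoke the standard augmenting-path / König-type reasoning: in a maximum matching there is no augmenting path from an unmatched good to agent $i$ through $G$. I expect the crux to be translating this combinatorial fact into the statement ``no remaining unallocated good is large for $i$.'' Concretely, if some currently unallocated good $g$ were large for agent $i$, then $g$ is unmatched (it was not allocated), and the edge $(i,g)$ together with $i$'s matched good would let us reroute: we could match $i$ to $g$ and free up $i$'s previously matched good to augment the matching toward a currently unmatched agent, contradicting maximality of the matching (equivalently, it would yield an augmenting path and a larger matching, contradicting that the matching was maximum). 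Hence the moment $i$ is frozen, every good she values as large has already been allocated, which is exactly the desired conclusion.

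The main obstacle I anticipate is the bookkeeping around the Modify step: the matching that is finally used is not the raw maximum matching but its adjusted version, so I must confirm that the adjustment preserves \emph{maximality} (the excerpt states adjustments ``preserve the size of the matching,'' so a maximum matching stays maximum) and that the agents who get frozen are among those \emph{matched} in the final matching. Once maximality of the final matching is secured, the augmenting-path argument applies verbatim. I would also note the persistence over time: goods are only removed from $V_2$ once allocated, so a good that is unallocated in a later round was also unallocated (and unmatched) at the freezing round; combined with the fact that $i$'s large/small classification of a fixed good never changes, this shows the property continues to hold for \emph{all} remaining unallocated goods throughout $i$'s frozen period, not merely at the instant of freezing. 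This temporal stability is the second point I would make explicit to fully match the proposition's phrasing ``none of the remaining unallocated goods.''
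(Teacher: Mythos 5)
Your proposal is correct and follows essentially the same route as the paper: a frozen agent lies on an alternating path rooted at an unmatched agent, so an unmatched good that she values as large would extend that path to an augmenting path, contradicting maximality of the (adjusted) matching. The two points you flag explicitly --- that the Modify step preserves maximality and that unallocated goods stay unmatched across rounds --- are left implicit in the paper but are the right details to check.
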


\begin{proof}
    We prove the proposition by contradiction. 
    Suppose that an agent $i_x$ is frozen in some round. 
    According to Algorithm~\ref{alg:freeze}, this indicates that agent $i_x$ is part of an alternating path, one end of which is not matched with any good. 
    We denote the alternating path $P = \{i_0, (g_1, i_1), \ldots, (g_p, i_p)\}$ with $i_0$ unmatched and $x \in [p]_0$ by Definition~\ref{def:alternating_path}. 
    
    If there exists one unallocated good $g'_x$ that agent $i_x$ considers large, there can be a matching with larger size with matching pairs being $\{(i_0, g_1), (i_1, g_2), \ldots,  (i_{x-1}, g_x), (i_x, g'_x), (i_{x+1}, g_{x+1}), \ldots, (i_p, g_p)\}$. 
    This contradicts the fact that the matching is maximum.
\end{proof}

\begin{corollary}\label{coro:frozenOnce}
    An agent will be frozen at most once throughout the entire allocation process.
\end{corollary}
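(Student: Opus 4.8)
The plan is to combine Proposition~\ref{prop:frozen_nolarge} with the elementary fact that the pool of unallocated goods never grows as the algorithm proceeds. The conceptual content is small, so most of the work lies in pinning down the right invariant and applying it cleanly.

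First I would isolate the precise condition under which an agent can be frozen. By Algorithm~\ref{alg:freeze} and the argument in the proof of Proposition~\ref{prop:frozen_nolarge}, an agent is frozen in a given round only if she lies on an alternating path of positive length rooted at an unmatched agent in that round's bipartite graph $G = (V_1, V_2, E)$. Any agent appearing on such a path is incident to at least one edge of $E$ (either the non-matching edge leaving her or the matching edge entering her), and by the construction of $G$ an agent is incident to an edge exactly when she values some currently remaining good as large. Hence \emph{being frozen requires that the agent values at least one still-unallocated good as large}; this uniform formulation covers both the matched interior nodes and the unmatched root of the path, so I need not separately track which role the frozen agent plays.

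Next I would apply Proposition~\ref{prop:frozen_nolarge} directly: at the moment agent $i$ is frozen in some round $t$, she values none of the goods unallocated at round $t$ as large. I would then observe that allocated goods are permanently removed from $V_2$ and none are ever reintroduced, so the set of unallocated goods in any later round $t' > t$ is a subset of the set at round $t$. Since valuations are fixed and $i$ values none of the round-$t$ remaining goods as large, she values none of the round-$t'$ remaining goods as large either; in particular this persists in every round after she is unfrozen and reinserted into $V_1$.

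Finally I would combine the two observations: from round $t$ onward agent $i$ has no incident edge in $G$, hence can never again lie on a positive-length alternating path, and therefore can never satisfy the freezing condition a second time. This gives that $i$ is frozen at most once, proving the corollary. I do not anticipate a serious obstacle; the only point requiring care is confirming from the statement of Algorithm~\ref{alg:freeze} that the freezing trigger is precisely ``lies on an alternating path,'' so that the degree argument applies uniformly, rather than some condition that could fire on an isolated (degree-$0$) agent.
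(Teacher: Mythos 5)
Your proof is correct and is essentially the paper's argument: both reduce to the observation that being frozen presupposes that the agent values some still-available good as large (the paper phrases this as ``frozen only right after receiving a large good,'' you phrase it via incidence to an edge of $G$ along the alternating path), and then invoke Proposition~\ref{prop:frozen_nolarge} together with the fact that the pool of unallocated goods only shrinks. The only cosmetic slip is ``from round $t$ onward she has no incident edge'' --- in round $t$ itself she is matched, so this should read ``in every round after $t$'' --- but this does not affect the argument.
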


\begin{proof}
Our algorithm only freezes an agent right after she receives a large good. 
This corollary then immediately follows from Proposition~\ref{prop:frozen_nolarge}.
\end{proof}

\begin{proposition}\label{prop:once_small_all_small}
    Once an agent is allocated a small good, she will never be allocated a large one in the remaining allocation process.
\end{proposition}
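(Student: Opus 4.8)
The plan is to exploit the fact that an agent receives a small good only when she is left unmatched in the (post-adjustment) maximum matching of some round, and then to show that, at that very moment, every good she could ever be allocated afterward is already small for her.

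First I would pin down exactly when an agent is handed a small good. Examining Algorithm~\ref{alg:mf}, a \emph{large} good is given to agent $i$ precisely when $i$ is matched in that round's matching, since the bipartite graph $G$ joins an agent only to the goods she values as large. The modify step of Algorithm~\ref{alg:adjustmatching} only reshuffles which agents receive large goods while preserving the size of the matching; it never hands out small goods. Small goods are distributed to the \emph{unmatched} agents at the end of a round (or during the final clean-up). Hence, if $i$ is first allocated a small good in some round $t$, then $i$ is unmatched in the matching $T$ actually used in round $t$, and $T$ is a maximum matching (the adjustment keeps its size maximum).

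Next I would show that every good still available after round $t$ is small for $i$. Because $i$ is unmatched in the maximum matching $T$, there can be no available good $g$ with $(i,g)\in E$ that is itself unmatched by $T$: such a $g$ would make the single edge $(i,g)$ an augmenting path, contradicting the maximality of $T$. This is the same augmenting-path observation that drives Proposition~\ref{prop:frozen_nolarge}. Therefore every good not used by $T$ is small for $i$. The goods consumed during round $t$ are exactly those matched by $T$ together with the arbitrary goods assigned to the unmatched agents, all of which are deleted from $V_2$; consequently the goods that survive into round $t+1$ form a subset of the $T$-unmatched goods, each of which is small for $i$.

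Finally I would propagate this forward in time. Since goods are only ever removed from $V_2$ and never reinserted, in every round after $t$ the available goods are a subset of those surviving round $t$, and hence all small for $i$. Thus in each later round the vertex $i$ has no incident edge in $G$, so $i$ can never be matched and never receives a large good; and since the clean-up phase is reached with strictly fewer remaining goods than unfrozen agents, each agent receives at most one good there, so no small-then-large transition can occur in that phase either. The main obstacle I anticipate is bookkeeping rather than mathematics: I must apply the augmenting-path argument to the matching that actually governs the allocation (i.e.\ the one output after the modify step, not the raw maximum matching), and I must treat the clean-up distribution as a separate case since it bypasses the matching step entirely.
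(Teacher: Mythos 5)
Your proposal is correct and rests on exactly the same idea as the paper's proof: an agent receives a small good only when she is unmatched in a maximum matching, and maximality forces every surviving good to be small for her, which then propagates to all later rounds. The paper states this in two sentences; your version merely spells out the augmenting-path and bookkeeping details.
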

\begin{proof}
    Suppose that an agent $i$ is allocated a small good in some round.
    This indicates that she considers none of the remaining unallocated goods to be large; otherwise, the matching found in Algorithm~\ref{alg:mf} is not maximum.
\end{proof}

\begin{proposition}
\label{prop:frozenRounds}
    If an agent $i$ is set to be frozen for $s_i$ rounds by Algorithm \ref{alg:freeze}, then $s_i \leq r_i - 1$. 
\end{proposition}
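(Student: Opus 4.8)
The plan is to unfold the definition of the freezing duration in Algorithm~\ref{alg:freeze} and show directly that the number of rounds it assigns to a newly frozen agent never exceeds $r_i - 1$. Since an agent is frozen precisely in the round she receives a large good (Corollary~\ref{coro:frozenOnce}), and since by Proposition~\ref{prop:frozen_nolarge} she regards no remaining good as large once frozen, her role during the freezing window is entirely passive: she sits out $s_i$ consecutive rounds while each other unfrozen agent is allocated one good per round. The freezing duration is chosen exactly to keep the partial allocation EFX from agent $i$'s own perspective, so the core of the argument is to quantify how long $i$ can afford to abstain before she would begin to envy the agents who keep receiving goods.

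First I would record the bundle bookkeeping at the moment of freezing: in the round agent $i$ is frozen she receives a large good worth $r_i$ to herself, whereas a round in which she is skipped costs her at least the value $1$ of a single good she could otherwise have received. Relative to continuing to participate, receiving one large good therefore buys agent $i$ a surplus of exactly $r_i - 1$ in her own valuation. The freezing rule spends this surplus at one unit per frozen round: during each of the $s_i$ idle rounds the competing agents gain one additional good while $i$ gains nothing, eroding her advantage by one unit. For the invariant ``agent $i$ does not envy the others up to any good'' to survive the whole freezing window, the surplus must not be overspent, which forces $s_i \le r_i - 1$.

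Concretely, I would argue at the level of the algorithm by contradiction: if Algorithm~\ref{alg:freeze} were to set $s_i > r_i - 1$, then at the end of the window the still-unfrozen agents would each have collected strictly more than $r_i - 1$ goods beyond agent $i$, so that even after deleting a single good from a competitor's bundle its value to $i$ would exceed the contribution $r_i$ of $i$'s large good, violating the EFX guarantee the freezing mechanism is designed to preserve. Since Algorithm~\ref{alg:freeze} only freezes for as long as EFX is maintained, it cannot select such an $s_i$, giving the claimed bound.

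The main obstacle I anticipate is making the surplus accounting rigorous rather than heuristic: I must pin down exactly which competing bundle governs the binding EFX constraint, being careful that the goods agent $i$ holds may be large for her but only small for a competitor (and conversely), and I must invoke the structural facts of Propositions~\ref{prop:frozen_nolarge} and~\ref{prop:once_small_all_small} to control the composition of both bundles across the window. Once the correct comparison bundle is identified, the inequality $s_i \le r_i - 1$ follows from a direct count of goods received per round against the value $r_i$ of the single large good.
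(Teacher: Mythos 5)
Your argument has a genuine and fatal gap: it is circular. You bound $s_i$ by asserting that ``Algorithm~\ref{alg:freeze} only freezes for as long as EFX is maintained,'' but the algorithm contains no such check --- it simply assigns $s_{i_x} \leftarrow \max\{s_{i_x}, \lfloor r_{i_0}-1\rfloor\}$ for each matched agent $i_x$ on an alternating path whose unmatched endpoint is $i_0$. The proposition is a statement about the numerical value that this assignment rule produces, and the EFX guarantee is a \emph{downstream consequence} proved later (in the proof of Theorem~\ref{thm:main}) \emph{using} this proposition. Deriving the bound from the assumption that the algorithm preserves EFX therefore assumes exactly what the proposition is needed to establish.

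The missing idea is the one the paper's proof turns on: the freezing duration of agent $i$ is computed from the ratio $r_{i_0}$ of the \emph{unmatched} agent $i_0$ at the start of the alternating path, not from $r_i$ itself, so one needs the invariant that $r_{i_0} \le r_i$ for every matched agent $i$ on that path. This invariant is supplied by Algorithm~\ref{alg:adjustmatching}, which reorients alternating paths so that higher-ratio agents end up matched; the unmatched endpoint is consequently a lowest-ratio agent on the path. With that in hand the bound is a one-line calculation, $s_i = \max_{i_0} \lfloor r_{i_0} - 1\rfloor \le \lfloor r_i \rfloor - 1 \le r_i - 1$. Your surplus-accounting narrative never invokes the adjustment step at all, and without it there is no reason the unmatched endpoint's ratio --- the quantity the algorithm actually floors --- should be dominated by $r_i$.
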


\begin{proof}
    According to Algorithm \ref{alg:freeze}, the number of freezing rounds 
    $$s_i = max\{\lfloor r_{i_0}\rfloor, i_0 \text{and\ $i$ are on an alternating path}\}$$
    for agent $i$. 
    As we adjusted the matched pairs in Algorithm \ref{alg:adjustmatching} that only the agents with larger ratios on the alternating path can be matched, we have $r_i \geq r_{i_0}$ and therefore $s_i = max\{\lfloor r_{i_0} - 1\rfloor\} \leq \lfloor r_{i}\rfloor - 1 \leq r_i - 1$.
\end{proof}

Now we are ready to prove Theorem~\ref{thm:main}.

\subsection{Proof of Theorem~\ref{thm:main}}
\label{sec:proof_thm_main}
Our proof of the theorem employs a constructive approach, showing that the partial allocations after each \emph{round} are EFX. 

The allocation process in Algorithm~\ref{alg:mf} is divided into two phases:  
\begin{enumerate}
    \item While each maximum matching \(T\) is perfect, and 
    \item From the first time \(T\) is not perfect onward.
\end{enumerate}

\subsubsection{Partial Allocation during Phase 1}
During phase 1, the partial allocations always satisfy EFX since all agents exclusively obtain large goods sequentially, one for each in every round.
If all goods are allocated within this phase, the resulting complete allocation will be EFX.
Even if the final round's matching is imperfect, agents who do not receive goods obtain only one good less than their companions.
No one will envy another agent eliminating any one of the goods of another one's bundle.

\subsubsection{Partial Allocation during Phase 2}
During phase 2, assume that after the $(R-1)_{th}$ round, the partial allocation is EFX. 
Now consider the $R_{th}$ round. 
In this phase, the agents can be categorized into 3 groups based on their status at the end of the round: 

The \emph{frozen} agents who are frozen in the current or the previous round, exemplified by agent(s) $i$; 

The \emph{matched agents} who are matched with large goods but not frozen in the current round, exemplified by agent(s) $j$; 

And the \emph{unmatched agents} who are allocated small goods, exemplified by agent(s) $k$. 

\paragraph{Frozen agent \(i\) does not strongly envy others.}
Suppose agent \(i\) is frozen in round \(R' \le R\). By Proposition~\ref{prop:frozenRounds}, the number of rounds \(i\) spent frozen is strictly less than the utility \(i\) gained in round \(R'\). 
Furthermore, by Proposition~\ref{prop:frozen_nolarge}, no remaining goods are considered large from \(i\)’s perspective, so no agent can gain more than \(r_i\) while \(i\) is frozen. 
Finally, Algorithm \ref{alg:freeze} also freezes any agent who acquires a large item from \(i\)’s point of view in round \(R'\).
As a result, \(i\) cannot strongly envy any other agent.

\paragraph{Matched agent $j$ does not strongly envy others.}
According to Proposition~\ref{prop:once_small_all_small}, agent $j$ perceives herself as allocated with $R$ large goods.
Furthermore, she considers that all other agents are allocated at most $R$ goods, whether large or small. 
Consequently, she does not envy any other agent.

\paragraph{Unmatched agent $k$ does not strongly envy others.}

Case 1: Agent $k$ sees no large goods in round $R$

If none of the goods matched in round $R$ appear large to agent $k$, then every other agent receives either a small good or nothing. 
Since $k$ also obtains a small good, $k$ cannot strongly envy anyone after this round.

Case 2: Agent $k$ sees large goods in round $R$

If agent \(k\) views some matched good as large (allocated to, say, agent \(i'\)) in round \(R\), it follows that \(k\) must have received \(R-1\) large items in prior rounds (or else the previous matching would not have been maximum). 
Under Algorithm \ref{alg:freeze}, any agent who receives a large item by \(k\)’s standard is frozen for at least \(\lfloor r_k - 1\rfloor\) subsequent rounds, during which \(k\) acquires at least \(\lfloor r_k - 1\rfloor\) small items. 
This ensures the utility gap remains below 1, so \(k\) will not strongly envy \(i'\) once \(i'\) is unfrozen—even if they both receive additional items later.
Moreover, if there are insufficient goods for a complete final round, \(k\) takes priority over agent $i'$ since $i'$ is frozen earlier, so \(k\) still will not strongly envy \(i'\).

\begin{algorithm}[tb]
    \caption{Match\&Modify\&Freeze$(N, M, \{r_i\}_{i\in [n]})$}
    \label{alg:mf}
    \begin{algorithmic}[1] %[1] enables line numbers
        \STATE  \textbf{Input}: the agents $N$, the goods $M$ and the utilities $\{r_i\}_{i \in [n]}$
        \STATE $V_1 \leftarrow N$, $V_2 \leftarrow M$
        \WHILE{$V_2 \neq \emptyset$} \label{line:loop}
        \STATE Construct the bipartite graph $G = (V_1, V_2, E)$
        \STATE Compute a maximum matching $T$ on $G$,
        \STATE Adjust the matching if it is not perfect with Algorithm \ref{alg:adjustmatching}:\\
        T = \emph{AdjustMatching(G, T)}
        \FOR{each matched pair $(i, g)$}
            \STATE Allocate $g$ to agent $i$
            \STATE Remove $g$ from $V_2$
        \ENDFOR
        \FOR{each unmatched agent $j$}
            \STATE Allocate one arbitrary unallocated good $g'$ to $j$
            \STATE Remove $g'$ from $V_2$
        \ENDFOR
        \STATE Freeze the agents in $V_1$ for $\{s_i|i \in [n]\} = FreezeAgents(G, T)$ rounds with Algorithm \ref{alg:freeze}. 
        \ENDWHILE
        \STATE \textbf{return} the allocation $\mathbf{X}$
    \end{algorithmic}
\end{algorithm}

\begin{algorithm}[tb]
    \caption{AdjustMatching$(G, T)$}
    \label{alg:adjustmatching}
    \begin{algorithmic}[1]
    \STATE \textbf{Input:} $\text{A bipartite graph } G = (V_1, V_2, E), \text{a matching } T$
    \FORALL{alternating path $P \in G$}
    \STATE Assume that $P = \{i_0, (g_1, i_1), \ldots, (g_p, i_p)\}$ by Definition~\ref{def:alternating_path}
    \IF{the ratio of agent $i_0$ is larger than that of $i_p$}
    \STATE Adjust the matched edges $\{(g_1, i_1), (g_2, i_2), \ldots, (g_p, i_p)\}$ in $T$ to $\{(i_0, g_1), \ldots, (i_{p-1}, g_p)\}$. 
    \ENDIF
    \ENDFOR
    \RETURN The modified matching $T$
    \end{algorithmic}
\end{algorithm}

\begin{algorithm}
    \caption{FreezeAgents$(G, T)$}
    \label{alg:freeze}
    \begin{algorithmic}[1]
    \STATE \textbf{Input:} $\text{A bipartite graph } G = (V_1, V_2, E), \text{a matching } T$
    \STATE Initialize $s_i \leftarrow 0\ for\ i \in [n]$
    \FORALL{alternating path $P \in G$}
        \STATE Assume that $P = \{i_0, (g_1, i_1), \ldots, (g_p, i_p)\}$ by Definition~\ref{def:alternating_path}
        \FORALL{matched agent $i_x \in \{i_1, \ldots, i_p\}$}
            \STATE $s_{i_x} \leftarrow max\{s_{i_x}, \lfloor r_{i_0}-1\rfloor\}$
        \ENDFOR
    \ENDFOR
    \RETURN The number of frozen rounds of all agents $\{s_i|i \in [n]\}$
    \end{algorithmic}
\end{algorithm}

\section{Discussions and Future Work}
In this paper, we have studied Pareto-optimal allocations and showed that EFX allocations always exist under the personalized bi-valued setting.
It is a natural future direction to study if an almost envy-free and Pareto-optimal allocation always exists under the same setting, and if it can be found in polynomial time in the case such an allocation always exists.

For the combination of EF1 and Pareto-optimality, we know that the maximum Nash welfare solution is always EF1 and Pareto-optimal even for general additive valuations~\citep{caragiannis2019unreasonable}.
However, designing a polynomial-time algorithm to find such an allocation is one of the central open problems in the fair division literature.
For personalized bi-valued utilities, the result about $k$-ary instances from \citet{garg2024computing} indicates that an EF1 and Pareto-optimal allocation can be found in polynomial time.

For general additive valuations, it is widely known that EFX is not always compatible with Pareto-optimality.
This is true even for tri-valued instances~\citep{garg2023computing}.
Given that EFX is compatible with Pareto-optimality for bi-valued instances~\citep{amanatidis2021maximum,garg2023computing,bu2024best}, we believe it is an interesting open problem to see \emph{if EFX is always compatible with Pareto-optimality for personalized bi-valued instances}.

Unfortunately, known techniques seem to fail to resolve this problem.
As we know, a maximum Nash welfare solution is both EFX and Pareto-optimal for bi-valued instances~\citep{amanatidis2021maximum}. One natural attempt is to see if it also works for personalized bi-valued instances. The following counterexample shows that a maximum Nash welfare solution may fail to be EFX (the maximum Nash welfare allocation gives agent 1 item 2 and agent 2 the remaining three items).

\begin{center}
    \begin{tabular}{|l|c|c|c|c|}
    \hline
   item & 1 & 2 & 3 & 4\\
   \hline
   agent 1  & 50 & 50 & 1 & 1 \\
   \hline
   agent 2  & 3 & 1 & 1 & 1 \\
   \hline
\end{tabular}
\end{center}

Another natural attempt is to apply the technique of \emph{Fisher market}, which has been successfully applied for finding fair and efficient allocations in many other settings~\citep{barman2018finding,garg2023computing,bu2024best}.
However, this approach is unlikely to work in the personalized bi-valued setting.
The efficiency guarantee obtained from this approach is the stronger notion of \emph{fractional Pareto-optimality} due to the \emph{first welfare theorem}, where an allocation is fractionally Pareto-optimal if it is not Pareto-dominated by any other \emph{fractional} allocation.
The following counterexample shows that fractional Pareto-optimality is not always compatible with EFX in the personalized bi-valued setting.
In this example, the only possible EFX allocations should give each agent exactly one large item from $\{1,2\}$ and exactly one small item from $\{3,4\}$.
Assume without loss of generality that agent $1$ receives $\{1,3\}$ and agent $2$ receives $\{2,4\}$.
This allocation is not fractionally Pareto-optimal: agent $1$ can give item $3$ to agent $2$, and in exchange for this, agent $2$ gives $1/3$ fraction of item $2$ to agent $1$.

\begin{center}
    \begin{tabular}{|l|c|c|c|c|}
    \hline
   item & 1 & 2 & 3 & 4\\
   \hline
   agent 1  & 6 & 6 & 1 & 1 \\
   \hline
   agent 2  & 3 & 3 & 1 & 1 \\
   \hline
\end{tabular}
\end{center}

Thus, resolving the compatibility of EFX and Pareto-optimality for the personalized bi-valued setting requires new techniques, and our characterization for Pareto-optimal allocations may be potentially useful.

\bibliography{ref}
\end{document}